\documentclass{article}

\usepackage[a4paper, margin=1in]{geometry}
\usepackage{graphicx}
\usepackage[textwidth=8em,textsize=small]{todonotes}
\usepackage{natbib}
\usepackage{dsfont}

\usepackage[ruled]{algorithm2e}
\SetEndCharOfAlgoLine{}

\usepackage{amsmath}
\usepackage{amsthm}
\usepackage{amsfonts}
\usepackage{multirow}

\usepackage{amsmath,bbm,graphicx,amssymb}
\usepackage{tabularx,booktabs}
\usepackage[utf8]{inputenc}
\usepackage{url}

\usepackage{authblk}
\usepackage{setspace}
\onehalfspacing

\newcommand{\argmin}{\operatornamewithlimits{argmin}}

\newtheorem{assumption}{Assumption}

\newtheorem{theorem}{Theorem}
\newtheorem{lemma}{Lemma}


\makeatother
\newcommand\independent{\protect\mathpalette{\protect\independenT{\perp}}}
\def\independenT#1#2{\mathrel{\rlap{$#1#2$}\mkern2mu{#1#2}}}

\title{Doubly regularized generalized linear models for spatial observations with high-dimensional covariates}

\author{Arjun Sondhi$^{1}$, Si Cheng$^2$, and Ali Shojaie$^2$}
\affil{$^{1}$Feinstein Institutes for Medical Research,\\	$^{2}$Department of Biostatistics, University of Washington}

\date{\today}

\begin{document}

\maketitle

\begin{abstract}
A discrete spatial lattice can be cast as a network structure over which spatially-correlated outcomes are observed. A second network structure may also capture similarities among measured features, when such information is available. Incorporating the network structures when analyzing such doubly-structured data can improve predictive power, and lead to better identification of important features in the data-generating process. 
Motivated by applications in spatial disease mapping, we develop a new doubly regularized regression framework to incorporate these network structures for analyzing high-dimensional datasets.
Our estimators can be easily implemented with standard convex optimization algorithms.
In addition, we describe a procedure to obtain asymptotically valid confidence intervals and hypothesis tests for our model parameters. 
We show empirically that our framework provides improved predictive accuracy and inferential power compared to existing high-dimensional spatial methods. These advantages hold given fully accurate network information, and also with networks which are partially misspecified or uninformative. The application of the proposed method to modeling COVID-19 mortality data suggests that it can improve the prediction of deaths beyond standard spatial models, and that it selects relevant covariates more often.
\end{abstract}

\section{Introduction}
\label{p2sec:intro}

Spatial models of disease incidence and mortality are essential for understanding geographic patterns of disease spread. 
This importance was highlighted during the COVID-19 pandemic: accurate predictions of epidemic patterns were key for understanding the geographic burden of disease, allocating limited resources, and devising effective interventions. 
The left panel in Figure~\ref{fig:coviddeaths} shows the total number of COVID-19 deaths (from February 28, 2020 to September 14, 2020) in King County, Washington. 
As is commonly the case, the death counts are available over a discrete geographic space, by ZIP code. 
While broad spatial patterns are evident from the plot, it is also clear that neighboring ZIP codes do not necessarily have similar death counts, suggesting that other factors might impact COVID-19 mortality. 
In fact, making spatial estimates while ignoring ZIP code specific covariates---as in the right panel of Figure~\ref{fig:coviddeaths}---may result in over-smoothing and unsatisfactory predictions (see Section~\ref{p2sec:covid}). 
These covariates also offer insight into factors associated with the pandemic and disease burden. 
Therefore, inference for associations between covariates and the outcome is often of independent interest. 

\begin{figure}[t]
    \centering
    \includegraphics[width=7cm]{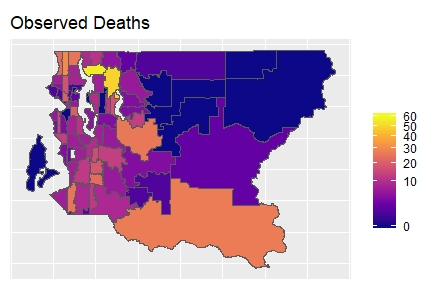}
    \includegraphics[width=7cm]{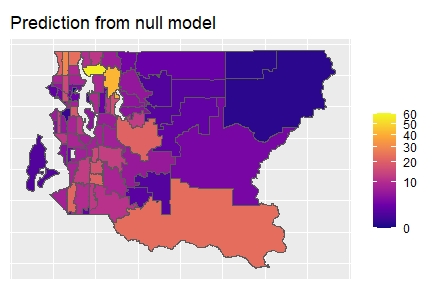}
    \caption{Left: observed COVID-19 deaths in King County, WA by ZIP code. Right: estimated COVID-19 deaths obtained by spatial smoothing, but with no covariates.}
    \label{fig:coviddeaths}
\end{figure}

Data over a discrete spatial domain can be represented as observations $y_i, i=1, \ldots, n$ on nodes of a graph $G_n = (V_n, E_n)$; here, the node set $V_n = \{1, \ldots, n\}$ corresponds to locations of observations and the edge set $E_n \subseteq V_n \times V_n$ contains undirected edges $(i, i')$ among neighboring locations, $i \sim i'$.
The conditional autoregressive (CAR) model \citep{besag1974spatial} and its intrinsic counterpart, the ICAR model \citep{besag1995conditional}, provide an elegant framework for analyzing data over discrete spatial domains. In the simple case of Gaussian observations, the ICAR model assumes that the conditional mean of each $Y_i$ is the average of its neighbors in $G_n$. Formally, denoting by $n_i$ the number of neighbors of $i$,
\begin{equation}\label{eqn:CAR}
    Y_i \mid y_{i' \ne i} \sim N\left(\sum_{i' \sim i}\frac{y_{i'}}{n_i}, \, \frac{\tau^2}{n_i}\right),
\end{equation} 
where $\tau^2$ is the shared variance parameter. 
The correlations induced by ICAR model can be captured through a spatial random effect, leading to a linear mixed model that also includes fixed effect parameters corresponding to covariates of interest, as well as an additional noise term that captures the variation not explained by the spatial component. 
The model can also be extended to non-Gaussian observations through generalized linear mixed models (GLMMs). 
However, despite recent progress \citep[e.g.,][]{guan2018computationally}, handling non-Gaussian spatial observations using GLMMs remains challenging \citep{hughes2015copcar}. 

Leveraging the conditional specification of probability distributions in CAR/ICAR, the Besag-York-Mollié model of \cite{besag1991bayesian} offers a flexible framework for spatial analysis using Bayesian hierarchical models, and is a popular choice for disease modeling. 
The prior specification of this model uses an ICAR component for spatial smoothing and an unstructured independent random effects component for location-specific noise.
Various data distributions can be accommodated in this framework by considering a latent Gaussian process over the Markov random field specified by the graph $G_n$ \citep{rue2005gaussian}. 
The introduction of the integrated nested Laplace approximation \citep[INLA;][]{rue2009approximate} has further facilitated the use of this model and its extensions \citep{leroux2000estimation, dean2001detecting}. 
The BYM2 model proposed by \citet{riebler2016intuitive} is a re-parameterization that improves interpretation of the model components, allowing for easier prior specification.

INLA offers considerable computational advantages, but the impact of its approximation on inference for fixed effect parameters is less clear. 
The recent implementation of the BYM2 model in \texttt{stan} \citep{morris2019bayesian} mitigates this issue at the cost of increased computational complexity.
Practitioners are thus faced with a tradeoff between computation and validity of inference. 
These challenges are compounded in the presence of high-dimensional covariates, which may occur when a small set of relevant covariates cannot be scientifically determined; this is illustrated in Section~\ref{sec:covid2}, when we include a larger set of covariates when analyzing COVID-19 deaths in King County. 

To overcome the above challenges, in this paper we develop a new \emph{doubly regularized} generalized linear model (GLM) framework for analyzing spatial data on discrete domains with a large number of covariates.
Our framework follows the same motivation as the CAR model \citep{besag1974spatial}, by encouraging similarity among outcomes in neighboring spatial regions in $G_n$.
However, instead of introducing correlations among neighboring regions through spatial random effects, we use an over-parametrized spatial mean surface defined by  region-specific intercepts, $\alpha_i$ for $i \in V_n$. 
More specifically, our first regularization encourages similarity among neighboring intercepts by using a \emph{network fusion penalty} that encourages $\alpha_i \approx \alpha_{i'}$ for neighboring nodes $i$ and $i'$ in $G_n$ \citep{li2019rnc}. 
Our penalty corresponds to a discrete version of the Laplace-Beltrami operator \citep{wahba1981spline} on $G_n$.
The Laplace-Beltrami operator defines thin-plate splines on manifolds \citep{duchamp2003spline}, which have a known connection to linear mixed models for Gaussian data \citep{ruppert2003semiparametric}, and are used as an alternative approach for analyzing spatial data \citep{cressie2015spatial}. 
Our approach is thus intimately related to commonly used approaches for analyzing spatial data; we further discuss the connection to mixed models in section~\ref{app:lmms} of the supplement. 
To facilitate inference for high-dimensional covariates ($p \gg n$), our approach also includes a second network regularization that encourages similarity among related covariates, as well as sparsity. 
Featuring a combination of $\ell_1$ and $\ell_2$ penalties, our doubly regularized GLM framework amounts to a convex optimization problem that can be solved efficiently for problems with large spatial domains (large $n$) and a large number of covariates (large $p$).

To infer the relevant covariates in a potentially high-dimensional model, we also develop an inference procedure for our doubly penalized generalized linear model. 
While our inference framework builds on recent developments in high-dimensional inference \citep{van2014asymptotically, javanmard2013confidence}, it is unique, as network penalty terms are generally not (semi)norms. 
This complicates the development of high-dimensional inference procedures. 
To overcome this challenge, we treat these non-norm penalty terms as part of the target loss function, which differs from the loss function minimized by the true population regression parameters. 
We control the distance between the target and true parameters, and then characterize the large sample behaviour of our proposed estimators. 
This approach allows us to incorporate partially uninformative or misspecified networks and still obtain valid inference for the association of features with the outcome. 
Although penalized regression models have recently been developed for spatially correlated data \citep{cai2019variable, chernozhukov2021}, our methods are the first, to our knowledge, to facilitate statistical inference for (non-Gaussian) generalized linear models.

We begin by describing our doubly-penalized regression methodology in Section~\ref{p2sec:funk}, including algorithms for fitting the resulting models.
We then give an overview of the theoretical details for our methods to obtain valid high-dimensional inference in Section~\ref{p2sec:asymp}.
In Section~\ref{p2sec:covid}, we analyze the King County COVID-19 death data.
The results demonstrate that our method achieves good prediction, while identifying important covariates using the proposed inference procedure. 
Our simulation studies in Section~\ref{p2sec:sims} confirm that when the networks are informative, our method provides both improved prediction accuracy and inferential power. 
They also show that our method's performances are robust to partially uninformative networks.

\section{GLMs with Feature and Unit Network Kernels}
\label{p2sec:funk}

Consider data consisting of $n$ observations $(y_1, x_1), \dots,  (y_n, x_n)$ where $y_i \in \mathbb{R}$ is the outcome and $x_i \in \mathbb{R}^p$ is the corresponding feature vector, centered and commonly scaled.
We assume the conditional mean relationship between the outcomes and features is
$$
\mathbb{E}[Y_i | X_i = x_i] = \mu(\alpha_i + x_i'\beta),
$$
where $\alpha_i \in \mathbb{R}$ are intercept terms for each unit $i$;  $\beta \in \mathbb{R}^p$ is a vector of common regression coefficients;  and $\mu$ denotes the inverse link function for the generalized linear model (GLM), which we will also refer to as the mean function.
Examples of inverse canonical link functions used for generalized linear models include $\mu(x) = x$ for Gaussian models and $\mu(x) = \exp(x)(\exp(x) + 1)^{-1}$ for binomial models.

We assume that observation units $i = 1, \dots, n$ are connected on a known (weighted) graph $G_n = (V_n, E_n)$ where $V_n = \left\{1, \dots, n \right\}$ and $E_n \subset V_n \times V_n$ is a set of undirected edges $(i, i')$ for $i \neq i'$. 
A connection between two units $i$ and $i'$ implies that they are likely to have similar outcomes $y_i$ and $y_{i'}$. 
Let $w_{ii'} \in \mathbb{R}^{+}$ denote the weight for each edge, which quantifies the strength of the similarity; in practice, we apply the binary weights $w_{ii'} = \mathds{1}\left\{(i,i') \in E_n\right\}$, but allow for arbitrary positive weights in our exposition. 
Let $A_n$ denote the weighted adjacency matrix of $G_n$, with off-diagonal elements $w_{ii'}$ and diagonal elements all zero.
Then, $L_n = D_n - A_n$ defines the graph Laplacian  \citep{chung1997spectral}, where $D_n = \text{diag}(d_1, \dots, d_n)$ and $d_i = \sum_{i' \in V_n} A_{ii'}$. 
Our doubly penalized GLM framework encourages similarity among outcomes for neighboring nodes in $G_n$ by imposing a graph Laplacian penalty on the node-specific intercepts:
$\alpha' L_n \alpha = \sum_{i \sim i'} w_{ii'} \left(\alpha_i - \alpha_{i'}\right)^2$. 
In order for the solution to be identifiable, we also add an ordinary squared $\ell_2$ penalty on $\alpha$,  $\|\alpha\|_2^2$.
In application, these penalties are respectively similar to specifying an ICAR prior for spatial smoothing and independent random effects for non-spatial variance, as in the BYM model.

Our framework also features a second penalty, $P(G_p, \beta)$, that  incorporates known similarities among features $X_j, j = 1, \dots, p$, captured by graph $G_p = (V_p, E_p)$, when such information is available. 
We consider two possible choices of $P(G_p, \beta)$: an $\ell_2$ penalty defined based on the graph Laplacian, and an $\ell_1$ penalty, as used in graph trend filtering \citep{wang2016trend}.
A connection between two features in $G_p$ implies that they have similar associations (i.e., similar $\beta$ coefficients) with the outcome $y$. 
This type of smoothness has been  previously leveraged by \citet{li2008network} and \citet{zhao2016significance}.
Defining $A_p, D_p,$ and $L_p$ analogously as above, the $\ell_2$ fusion penalty,  
$$
P(G_p, \beta) = \frac{1}{2}\beta' L_p \beta = \frac{1}{2} \sum_{(j,j') \in E_p} w_{jj'} (\beta_j - \beta_{j'})^2,
$$
can be seen as a generalized ridge penalty that shrinks the weighted squared distance between connected features' parameters towards zero. 
Similarly, the $\ell_1$ fusion penalty can also be seen as a generalized lasso penalty \citep{tibshirani2011}. 
To this end, let $J_p$ be the incidence matrix of $G_p$, where each row of $J_p$ corresponds to an edge $(j,j')$ in $G_p$, with element $j$ of the row having value $w_{jj'}$ and element $j'$ having value $-w_{jj'}$ (in an unweighted graph, $J_p'J_p = L_p$). 
Then, the $\ell_1$ fusion penalty can be written as 
$$
P(G_p, \beta) = \| J_p \beta \|_1 = \sum_{(j,j') \in E_p} w_{jj'} |\beta_j - \beta_{j'}|, 
$$
which encourages coefficients of graph-connected features to be exactly equal.
Putting things together, the general form of our proposed estimator is given by
\begin{equation}
(\hat{\alpha}, \hat{\beta}) = \argmin_{\alpha, \beta} \left\{ \ell(y; \alpha + X\beta) + \frac{1}{2} \gamma_n \alpha' (L_n + \delta I_n) \alpha + \gamma_p P(G_p, \beta) + \lambda \| \beta \|_1 \right\},
\label{eq:funk}
\end{equation}
where $\gamma_n$, $\gamma_p$, and $\lambda$ are positive tuning parameters, and $\delta > 0$ is a  small fixed constant.
Here, $X$ denotes the design matrix of observed features $x_1, x_2, \dots, x_p$.


The optimization problem \eqref{eq:funk} has four key components: (i) the loss function $\ell$ relates the outcome $y$ to the features $X$ while allowing for a unique intercept for each observation unit; (ii) the unit network smoothing penalty smooths $\hat{\alpha}$ over $G_n$; (iii) the feature network smoothing penalty smooths $\hat{\beta}$ over $G_p$; and (iv) the standard lasso penalty enforces sparsity on the features. 
The addition of the lasso penalty allows us to obtain sparse solutions in high dimensions, even when knowledge of similarity among features does not exist (i.e., $\gamma_p = 0$). It also allows us to obtain asymptotic consistency for $\hat\beta$ at a rate that enables valid inference in high dimensions. 
We name this framework ``\texttt{g}eneralized \texttt{l}inear \texttt{m}odels with \texttt{f}eature and \texttt{u}nit \texttt{n}etwork \texttt{k}ernels", or \texttt{glm-funk}, drawing from the nomenclature of kernels as similarity matrices \citep{randolph2018kernel}.
Both of the \texttt{glm-funk} estimators described are computed by solving convex optimization problems, guaranteeing the existence of a global minimizer.

\subsection{Optimization for $\ell_2$ feature network smoothing}

To describe the algorithm for solving the \texttt{glm-funk} problem with $\ell_2$ feature network smoothing, we first rewrite the objective function in terms of 
$\theta := (\alpha', \beta')'$:

\begin{equation}
\hat{\theta} = \argmin_{\theta \in \mathbb{R}^{n+p}} \left\{ \ell \left(y; \tilde{X}\theta \right) + \frac{1}{2} \theta' \tilde{L} \theta + \lambda \mathcal{R}(\theta) \right\},
\label{eq:funkl2}
\end{equation}
where $\tilde{X} = [I_n \hspace{10pt} X]$, $\mathcal{R}(\theta) = \| \beta \|_1$, and
$$
\tilde{L} = \left[ \begin{array}{cc} \gamma_n (L_n + \delta I_n) & 0 \\ 0 & \gamma_p L_p \end{array} \right].
$$
Then, the optimization problem in \eqref{eq:funkl2} can be solved using a simple proximal gradient descent algorithm, given in Algorithm \ref{alg:L2}.
In our simulations and data analysis, we use a fixed step-size of $\eta^t = 0.001$, which provides reasonably fast convergence. 

\begin{algorithm}[t]
\label{alg:L2}
\SetAlgoLined
 Define $\mathcal{L}(\theta) := \ell(y; \tilde{X}\theta) + \frac{1}{2} \theta' \tilde{L} \theta$.\;
 Initialize $\theta^0$.\;
 \For{$t = 0, 1, \dots,$ until convergence of $\theta$}{
  Compute gradient, $\nabla \mathcal{L}(\theta^t) = \tilde{X}'\left( \mu(\tilde{X}\theta^t) - y \right) + \tilde{L}\theta^t$.\;
  Take a gradient step, $\tilde{\theta} = \theta^t - \eta^t \nabla \mathcal{L}(\theta^t)$.\;
  Take a proximal step, 
  \begin{align*}
  \theta^{t+1} = \text{prox}(\tilde{\theta})
               = (\tilde{\theta}_1, \dots, \tilde{\theta}_n, S_{\lambda}(\tilde{\theta}_{n+1}), \dots, S_{\lambda}(\tilde{\theta}_{n+p}))'
  \end{align*}
  where $S_\lambda(x) = \text{sign}(x) \max (0, |x| - \lambda)$.\;
 }
 \KwResult{$\hat{\theta} = \theta^{t+1}$}
 \caption{Proximal gradient descent for \texttt{glm-funk} with $\ell_2$ feature smoothing}
\end{algorithm}

\subsection{Optimization for $\ell_1$ feature network smoothing}

When using the generalized lasso penalty, the elements of $\beta$ are nonseparable in the penalty function. This leads to computational difficulties when using a non-identity GLM link function.
To overcome this challenge, we solve an alternative problem, proposed by \citet{chen2012smoothing}, by replacing the generalized lasso penalty with a smooth $\ell_\infty$ approximation of the generalized lasso penalty:
$$
f_q(\beta) = \max_{\| \nu \|_\infty < 1} \left\{ \nu' J_p \beta - \cfrac{q}{2} \| \nu \|_2^2 \right\}.
$$
Here, $q$ is a parameter that controls the approximation to the original $\ell_1$ problem; when $q = 0$, $f_q(\beta) = \| J_p \beta \|_1$. 
\citet{chen2012smoothing} prove that, for $q = \frac{\epsilon}{|E_p|}$, the absolute difference between optimal objective values of the original and approximate problems is upper bounded by $\epsilon$. 
The gradient of $f_q(\beta)$ is $J_p \nu^*$ where $\nu^* = S_\infty \left(\frac{\gamma_p J_p \beta}{q} \right)$, and $S_\infty$ is the element-wise projection operator onto the $\ell_\infty$ unit ball:
$$
S_\infty(x) = \begin{cases}
    x, &\quad \text{for } -1 \leq x \leq 1\\
    1, &\quad \text{for }  x \geq 1\\
    -1, &\quad  \text{for } x \leq -1
    \end{cases}.
$$

Replacing $\| J_p \beta \|_1$ with $f_\alpha(\beta)$, we can solve the approximate $\ell_1$ \texttt{glm-funk} problem using an accelerated proximal gradient descent algorithm \citep{beck2009fast}.
An adapted version of the algorithm presented in \citet{chen2012smoothing} is given in Algorithm \ref{alg:L1}. 
In our simulations and data analysis, we set $q = 0.001$, which provides sensible results, and a fast convergence rate.

\begin{algorithm}[t]
\label{alg:L1}
\SetAlgoLined
 Define $\mathcal{L}(\theta) := \ell(y; \tilde{X}\theta) + \frac{1}{2} \gamma_n \alpha' (L_n + \delta I_n) \alpha + \gamma_p f_q(\beta)$.\;
 Initialize $\theta^0, w^0 = \theta^0, s^0 = 1$.\;
 \For{$t = 0, 1, \dots,$ until convergence of $\theta$}{
  Compute gradient, 
  $$
  \nabla \mathcal{L}(w^t) := \tilde{X}'(\mu(\tilde{X}w^t) - y) + [\gamma_n L_n w^t  \hspace{10pt} \gamma_p J_p' \nu^*]',
  $$
  and the Lipschitz constant $C_L := \| \nabla^2 \mathcal{L}(w^t) \|_2$.\;
  Take a gradient step, $\tilde{\theta} = w^t - C_L^{-1} \nabla \mathcal{L}(\theta^t)$.\;
  Take a proximal step, 
  \begin{align*}
  \theta^{t+1} = \text{prox}(\tilde{\theta})
               = \left( \tilde{\theta}_1, \dots, \tilde{\theta}_n, S_{\lambda / C_L}(\tilde{\theta}_{n+1}), \dots, S_{\lambda / C_L}(\tilde{\theta}_{n+p}) \right)',
  \end{align*}
  where $S_{\lambda / C_L}(x) = \text{sign}(x) \max \left( 0, |x| - \frac{\lambda}{C_L} \right)$.\;
  Set $s^{t+1} = 2/(t + 3)$.\;
  Set $w^{t+1} = \theta^{t+1} + \frac{1 - s^t}{s^t} s^{t+1} (\theta^{t+1} - \theta^{t})$.\;
 }
 \KwResult{$\hat{\theta} = \theta^{t+1}$}
 \caption{Accelerated proximal gradient descent for \texttt{glm-funk} with approximate $\ell_1$ feature smoothing}
\end{algorithm}

\subsection{Prediction and tuning}

Suppose we use $n_{\mathrm{trn}}$ observations for training the \texttt{glm-funk} model, and are interested in predicting outcomes for $n_{\mathrm{tst}}$ out-of-sample observations. 
In order to make predictions on out-of-sample data, we require an estimate of the unit-level intercepts $\alpha_{\mathrm{tst}}$. 
The test sample predictions are then given as $\mu(\hat{\alpha}_{\mathrm{tst}} + X\hat{\beta})$.
Assuming we observe the entire network $G_{\mathrm{full}}$ connecting the $n_{\mathrm{trn}} + n_{\mathrm{tst}}$ units, we partition the Laplacian corresponding to $G_{\mathrm{full}}$ as:
\begin{equation*}
L_{\mathrm{full}}
= \left[ \begin{array}{cc} L_{\mathrm{trn},\mathrm{trn}} & L_{\mathrm{trn},\mathrm{tst}} \\ L_{\mathrm{tst},\mathrm{trn}} & L_{\mathrm{tst},\mathrm{tst}} \end{array} \right]
= \left[ \begin{array}{cc} L_{11} & L_{12} \\ L_{21} & L_{22} \end{array} \right]
\end{equation*}
Then, we estimate $\alpha_{\mathrm{tst}}$ as in \citet{li2019rnc}:
\begin{align*}
\hat{\alpha}_{\mathrm{tst}} 
= \argmin_{\alpha_{\mathrm{tst}}} \left\{ (\hat{\alpha}_{\mathrm{trn}}, \alpha_{\mathrm{tst}})' L_{\mathrm{full}} (\hat{\alpha}_{\mathrm{trn}}, \alpha_{\mathrm{tst}}) \right\}
= -L_{22}^{-1} L_{21} \hat{\alpha}_{\mathrm{trn}}
\end{align*}
Having no network knowledge for the test observations---i.e., when the training and test units are disjoint on $G_{\mathrm{full}}$---corresponds to estimating $\hat{\alpha}_{\mathrm{tst}} = 0$.

The \texttt{glm-funk} problems involve three tuning parameters $\gamma_n$, $\gamma_p$, and $\lambda$. 
We tune these jointly using $K$-fold cross-validation to minimize the prediction error.
Ideally, the $K$ folds would be determined using non-overlapping connected components of $G_n$, but this is not always possible for arbitrary networks.
Due to the dependence among observation units over $G_n$, na\"ive cross-validation is not guaranteed to provide a good estimate of out-of-sample prediction error. 
However, as in \citet{li2019rnc}, the procedure works relatively well in practice.
Although a method for cross-validation with correlated data has been recently proposed \citep{cvcorrdata2019}, it requires knowledge of the population covariance matrix, which we do not directly assume. 
In order to efficiently determine the optimal tuning parameters, we use coordinate descent \citep{wright2015coordinate}.
Specifically, we optimize a single parameter (via $K$-fold cross-validation) while holding the others fixed, and cycle through all three parameters.
In practice, this procedure usually converges in a very small number of coordinate descent iterations.

\subsection{Related methods}
\label{p2sec:background}

As discussed in Section~\ref{p2sec:intro}, spatial regression is commonly performed with Bayesian hierarchical models using priors that induce spatial smoothing.
These models naturally require parametric assumptions, and can involve a trade-off between high computational complexity and validity of inference due to approximations used. 
In this section, we motivate our doubly regularized method by building on prior work in penalized regression over known network structures.

The recent proposal of \citet{li2019rnc} accounts for network structure among observation units by enforcing \textit{cohesion} among the units.
More specifically, it introduces the \emph{regression with network cohesion} (RNC) model, which estimates unit-level intercepts subject to a cohesion penalty over $G_n$, by solving the optimization problem,
$$
\min_{\alpha \in \mathbb{R}^n, \beta \in \mathbb{R}^p} \left\{ \ell(y; \alpha + X\beta) + \frac{1}{2} \gamma_n \alpha' (L_n + \delta I_n) \alpha \right\},
$$
where $\ell$ is a loss function (usually the negative log-likelihood), and $\gamma_n > 0$ tunes the strength of the penalty.
The similarity between observations is captured through the $n$-dimensional intercept term $\alpha$. 
The addition of $\delta I_n$ for $\delta > 0$ guarantees that a solution exists.
The penalty term, which can be written as
$$
\alpha' (L_n + \delta I_n) \alpha = \sum_{(i,i') \in E_n} w_{ii'} (\alpha_i - \alpha_{i'})^2 + \delta \sum_{i \in V_n} \alpha_i^2,
$$
implies that more strongly connected units are encouraged to have similar intercepts.
This cohesion effect implies a similarity in outcomes independent of the features $X$; connected units may still differ in their values of $x_i'\beta$.
This is similar to incorporating variance components in a generalized linear mixed model, as further discussed in section~\ref{app:lmms} of the supplement.
However, the RNC intercepts do not require distributional assumptions, and are specifically fit to optimize prediction power.
Moreover, computation is much easier, as this problem is easily solved with standard convex optimization algorithms. 
\citet{li2019rnc} show that the RNC model improves prediction for network-linked observations compared to standard methods, while maintaining the interpretability of the fixed effects in standard generalized linear models. 
However, they do not discuss high-dimensional settings and statistical inference.

A similar choice for incorporating \textit{feature} network structure is the Grace (graph-constrained estimation) penalty of \citet{li2008network}, who proposed the estimator,
$$
\hat{\beta} = \argmin_{\beta \in \mathbb{R}^p} \left\{ \ell(y; X\beta) + \frac{1}{2} \gamma_p \beta' L_p \beta + \lambda \| \beta \|_1 \right\},
$$
where, as before, $\ell$ is a loss function, and $\gamma_p, \lambda > 0$ are penalty parameters.
Unlike the RNC estimator, the Grace penalty is well-defined for high-dimensional settings, due to the regularization applied to the features.
This penalized regression encourages cohesion among $\beta$ coefficients corresponding to connected features.
The inclusion of an $\ell_1$ penalty also enforces sparsity in the solution $\hat{\beta}$. 
\citet{zhao2016significance} developed a significance test for Grace-penalized estimation, but their approach only applies to linear regression models.
This similarly holds for the method of \citet{chernozhukov2021}, which extends LASSO-based inference to space-time correlated data.

\citet{randolph2018kernel} account for two-way structured data in a \textit{kernel-penalized} linear regression model by solving
$$
\hat{\beta} = \argmin_{\beta} \left\{ \| y - X\beta \|_H^2 + \lambda \| \beta \|_{Q^{-1}}^2 \right\},
$$
where $H$ and $Q^{-1}$ are, respectively, $n \times n$ and $p \times p$ kernel matrices which summarize distances between the units and features. 
This can be thought of as fitting a generalized least squares (GLS) model subject to a generalized ridge penalty.
Ignoring the $\ell_1$ penalty, we can consider the proposal of \citet{li2008network} to fall within this framework, using the graph Laplacian as a kernel.
The RNC penalty of \citet{li2019rnc} is also similar to the first term in the kernel-penalized regression problem. 
Both methods penalize quantities that capture ``left-over" variation from the features with respect to a unit distance matrix; in kernel-penalized regression, the distances between residuals $y - X\beta$ are penalized, while RNC penalizes the intercepts $\alpha$.
However, GLS does not easily extend to non-Gaussian models.
Therefore, in this paper, we apply the idea of kernel penalization to generalized linear models by unifying RNC and Grace-style penalties in high-dimensional settings.
We also develop a statistical inference procedure that allows for valid hypothesis tests and confidence intervals for the regression coefficients, even when the networks are not fully informative. 

\section{Asymptotics and Inference}
\label{p2sec:asymp}

To assess the importance of covariates $X_j, j = 1, \dots, p$ for outcome $Y$ measured over a spatial domain, we are interested in obtaining valid inference for their corresponding regression coefficients $\beta$. 
Our estimator given in \eqref{eq:funk} is non-standard due to the use of the $n$-dimensional intercept term and penalty terms that are not semi-norms. 
In this section, we first investigate the large sample behaviour of $\hat{\beta}$ and $\hat{\alpha}$ estimated using the \texttt{glm-funk} estimator \eqref{eq:funk} with $\ell_1$ smoothing.
We defer discussion of the estimator with $\ell_2$ smoothing \eqref{eq:funkl2} to the Supplementary Material.
The consistency results are then used to obtain a valid inference procedure for the true regression parameters.

\subsection{Asymptotics}

We begin by describing the assumptions required for our theoretical results to hold.
First, we require that the outcomes $Y_i$ satisfy certain tail properties. 
\begin{assumption}[Tail behaviour]
\label{assump:tails}
One of the following holds: \\
(i) The centered observed outcomes $Y_i - \mathbb{E}[Y_i|X_i = x_i] = Y_i - \mu_i$ are uniformly sub-Gaussian, i.e.
$$
\max_{i = 1, \dots, n} K^2 \mathbb{E}\left[\exp \left(\cfrac{(Y_i - \mu_i)^2}{K^2} \right) - 1 \right] \le \sigma^2_0.
$$
for some constants $K, \sigma^2_0 > 0$. \\
(ii) The centered observed outcomes $Y_i - \mathbb{E}[Y_i|X_i = x_i] = Y_i - \mu_i$ are uniformly sub-exponential, satisfying 
$$
\max_{i = 1, \dots, n} \| Y_i - \mu_i \|_{\psi_1} = K_{\psi_1} < \infty,
$$
where
$$
\| Y \|_{\psi_1} = \inf \left\{ t > 0:  \mathbb{E} \exp(|Y|/t) \le 2 \right\}.
$$
\end{assumption}

These tail conditions cover a large variety of common generalized linear models.
Gaussian and binomial data satisfy the sub-Gaussian property, while Poisson and exponential outcomes have sub-exponential tails. 

We further require some conditions on the loss function $\ell$ and the design matrix $X$. 

\begin{assumption}[Loss function properties]
\label{assump:lip}
The following hold: \\
(i) The loss function $\ell: \Theta \times \Omega \rightarrow \mathbb{R}$ is integrable over all $(X,y) \in \Omega$ for each $\theta \in \Theta$. \\
(ii) For almost all $(X,y) \in \Omega$, the derivative $\nabla_\alpha \ell$ exists for all $\alpha$. \\
(iii) There exists an integrable function $g: \Omega \rightarrow \mathbb{R}$ such that $|\ell(y; \tilde{X}\theta)| \le g(X, y)$ for all $\theta \in \Theta$ and almost all $(X,y) \in \Omega$. \\
(iv) The conditional mean function $\mu$ and its derivative $\mu'$ are Lipschitz continuous with constants $C_\mu < \infty$ and $C_{\mu'} < \infty$. \\
(v) $\mu'$ is uniformly bounded away from zero, that is, $|\mu'(\cdot)|^{-1} \le U' < \infty$.
\end{assumption}

\begin{assumption}[Design scaling]
\label{assump:designL2}
The design matrix $X$ satisfies $|X_{ij}| \le R < \infty$ for all $i,j$, and scales as $\|X\|_2 = o_p\left( \sqrt{\frac{n}{\log p}} \right)$.
\end{assumption}

The loss function assumptions are fairly mild, and common in the high-dimensional inference literature \citep{van2014asymptotically, javanmard2013confidence, buhlmann2013statistical}. 
The design scaling assumption is equivalent to assuming that the maximum eigenvalue of $X'X$ grows at a rate slower than $n$, and can be shown to hold for various random designs  \citep[see Section 6.4 of][]{wainwright2019high}. It also implies that $\sqrt{\frac{\log p}{n}} = o_p(1)$ by the boundedness of $|X_{i,j}|$.

In order to state the remaining assumptions, we first define some quantities of interest.
For a generic function $f$, let $\mathbb{P}f := n^{-1} \sum_{i=1}^n \mathbb{E}[f(y_i, x_i)]$ and $\mathbb{P}_nf := n^{-1} \sum_{i=1}^n f(y_i, x_i)$.
Then, we rewrite our optimization problem as:
$$
\hat\theta = \argmin_{\theta} \left\{\mathbb{P}_n \mathcal{L}(\theta) + \lambda \mathcal{R}(\theta) \right\},
$$
where $\mathcal{L}(\theta) = \ell(\alpha_i + x_i'\beta) + \frac{1}{2} \gamma_n \alpha' (L_n + \delta I_n) \alpha$ and $\mathcal{R}(\theta) = \| \beta \|_1 + \frac{\gamma_p}{\lambda} \| J_p \beta \|_1$.
Our analysis involves two sets of parameters. 
The \textit{true parameter} is defined as $\theta^0 := \argmin_{\theta} \mathbb{P} \ell(\theta)$.
We are interested in inference for the true parameter; however, theoretical results are difficult to directly obtain since the penalty involves terms which are not semi-norms.
Therefore, we also work with the \textit{target parameter}, which we define as $\theta^* := \argmin_{\theta} \mathbb{P} \mathcal{L}(\theta)$, where $\mathcal{L}$ includes the non-$\ell_1$ penalty terms.

Our theoretical analysis is performed in a high-dimensional setting, where $n$ and $p(n)$ (and hence, the dimensions of $\alpha$ and $\beta$) are allowed to grow to infinity.
Therefore, $\theta^0$ and $\theta^*$ are dependent on $n$ and $p$, and the following assumptions apply to a sequence of data-generating processes indexed by $(n,p)$. 
Our theoretical results thus hold with high probability for large $(n,p)$. 
For ease of exposition, we do not include this dependence in our notations. 

We make the following assumptions for estimation of the target and true regression parameters.
\begin{assumption}[Compatibility condition]
\label{assump:comp}
Given a set $S \subset \left\{1, \dots, p\right\}$ with $|S| = s$, for all $c > 0$ and for all $\theta = (\alpha, \beta)'$ satisfying $\|\beta_{S^c}\|_1 + c \|J_p \beta\|_1 \le \| \alpha \|_1 + 3 \| \beta_S\|_1$, it holds that: 
$$
\cfrac{\| \alpha \|_1}{2} + \| \beta_S \|_1 \le \cfrac{\|\theta\| \sqrt{s}}{\phi(s)}
$$
for some norm $\| \cdot \|$ and constant $\phi(s) > 0$.
\end{assumption}

\begin{assumption}[Restricted strong convexity]
\label{assump:rsc}
For all $\theta = (\alpha, \beta)'$ satisfying
$$
\| \alpha - \alpha^* \|_1 + \| \beta - \beta^* \|_1 + \frac{\gamma_p}{\lambda} \left\| J_p \left( \beta - \beta^* \right) \right\|_1 \le M^*,
$$
with
$$
M^* = \cfrac{16s\lambda^2}{\rho \phi^2(S) c} + \cfrac{2\gamma_p \| J_p \bar{\beta} \|_1}{\rho},
$$
and $\lambda \ge 8\rho$, it holds that:
$$
\mathbb{P}\left( \ell(\theta) - \ell(\theta^*) \right) \ge \nabla \mathbb{P}\ell \left( \theta^* \right)' \left( \theta - \theta^* \right) + G\left( \| \theta - \theta^* \| \right),
$$
where $G(x) = c x^2$ for some constant $c > 0$. 
\end{assumption}

The compatibility and restricted strong convexity conditions are common in high-dimensional theory \citep{buhlmann2011statistics}.
Intuitively, restricted strong convexity at the optimum $\theta^*$ means that the loss function is curved sharply around $\theta^*$. 
Hence, when $\mathbb{P}(\ell \left( \theta^* \right) - \ell(\theta))$ is small, so is $\| \theta^* - \theta \|$.
\citet{negahban2012unified} describe how this condition is needed for nonlinear models, and proved that it holds for various common loss functions in sparse high-dimensional regimes, including the logistic regression deviance. 

Finally, we make assumptions on components of the true data-generating processes, and their relationship to the penalty parameters in the model.

\begin{assumption}[Sparsity]
\label{assump:sparsity}
$\beta^0$ is $s$-sparse, that is, $\| \beta^0 \|_0 = s$ with $s = O_p \left( \sqrt{n} / \log p \right)$.
\end{assumption}

\begin{assumption}[Penalty scaling]
\label{assump:scaling}
The following hold: \\
(i) $\lambda = O_p\left( \sqrt{\frac{\log p}{n}} \right)$, \\
(ii) $\gamma_p \| J_p \beta^0 \|_1 = o_p(\lambda)$, and \\
(iii) $\gamma_n \| (L_n + \delta I) \alpha^0 \|_2 = O_p\left( n^c \right)$ where $c \in (0, \frac{1}{2})$.
\end{assumption}

The sparsity assumption and scaling condition of $\lambda$ are standard rates in the high-dimensional inference literature \citep{van2014asymptotically, negahban2012unified}.

Part (ii) of Assumption \ref{assump:scaling} allows us to not observe fully informative feature networks.
It states that the quality of feature network smoothing is inversely proportional to the magnitude of its tuning parameter. 
That is, if the feature network is truly informative, we expect $\| J_p\beta \|_1 \rightarrow 0$ (at a rate faster than $\lambda$), and $\gamma_p$ can be larger.
In this scenario, the network structure impacts the $\beta$-penalization more than the lasso penalty. 
However, if the network smoothing does not capture the true structure of $\beta$, then $\| J_p\beta^0 \|_1$ will be far from zero, and $\gamma_p$ should tend to 0 at a rate faster than $\| J_p \beta^0 \|_1^{-1}$. 
In addition, if the feature network is uninformative, then $\gamma_p$ also needs to go to 0 faster than $\lambda$.
In this case, most of the penalization is driven by the lasso penalty, rather than the network structure encoded by $J_p$.
The empirical results in Section~\ref{p2sec:sims} suggest that our cross-validation approach achieves these properties data-adaptively.

Part (iii) of Assumption \ref{assump:scaling} similarly allows for some degree of non-informativeness in the unit network, and is also necessary to establish control of $\| \alpha^* - \alpha^0 \|$.
It establishes the trade-off needed between the unit network parameter $\gamma_n$ and the ridge penalty parameter $\delta$; if the unit network is informative and $\gamma_n$ is large, then $\delta$ should shrink to 0. 

Under these assumptions, we can prove that $\hat{\beta}$ and $\hat{\alpha}$ tend to the target parameters $\beta^*$ and $\alpha^*$ in $\ell_1$ norm. 

\begin{theorem}[Consistency]
\label{thm:consistency}
Under Assumptions \ref{assump:tails}-\ref{assump:scaling}, we have that
$$
\| \hat{\alpha} - \alpha^* \|_1 + \| \hat{\beta} - \beta^* \|_1 = O_p \left( \lambda + \cfrac{\gamma_p}{\lambda} \| J_p \beta^* \|_1 \right).
$$
\end{theorem}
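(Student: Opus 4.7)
My plan is to follow the standard template for high-dimensional $M$-estimators, adapted to handle the over-parameterized intercept block $\alpha$ together with the seminorm penalty $\|J_p\cdot\|_1$ on $\beta$. First I would write down the basic inequality implied by optimality of $\hat\theta$,
$$
\mathbb{P}_n\mathcal{L}(\hat\theta) + \lambda\mathcal{R}(\hat\theta) \le \mathbb{P}_n\mathcal{L}(\theta^*) + \lambda\mathcal{R}(\theta^*),
$$
and rearrange it into
$$
\mathbb{P}[\mathcal{L}(\hat\theta) - \mathcal{L}(\theta^*)] \le (\mathbb{P} - \mathbb{P}_n)[\ell(\hat\theta) - \ell(\theta^*)] + \lambda[\mathcal{R}(\theta^*) - \mathcal{R}(\hat\theta)],
$$
noting that the deterministic quadratic component of $\mathcal{L}$ cancels under $\mathbb{P}_n - \mathbb{P}$. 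The left-hand side will be lower-bounded by restricted strong convexity; the right-hand side by empirical-process control plus penalty dualization.

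Next I would control the empirical-process term. Using the Lipschitz conditions in Assumption \ref{assump:lip}, I can linearize $\ell(\hat\theta) - \ell(\theta^*)$ and dominate $(\mathbb{P}_n - \mathbb{P})[\ell(\hat\theta) - \ell(\theta^*)]$ by $\|(\mathbb{P}_n - \mathbb{P})\nabla\ell(\theta^*)\|_\infty\,\|\hat\theta - \theta^*\|_1$ up to a quadratic remainder that is absorbed by the RSC curvature. Under the tail hypothesis in Assumption \ref{assump:tails} and the design scaling in Assumption \ref{assump:designL2}, standard sub-Gaussian or sub-exponential maximal inequalities deliver $\|(\mathbb{P}_n - \mathbb{P})\nabla\ell(\theta^*)\|_\infty = O_p(\sqrt{\log p / n}) = O_p(\lambda)$ by Assumption \ref{assump:scaling}(i); this quantifies the constant $\rho$ in Assumption \ref{assump:rsc} and justifies the choice $\lambda \ge 8\rho$.

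For the population lower bound I would exploit first-order optimality of $\theta^*$ for $\mathbb{P}\mathcal{L}$: this gives $\nabla_\beta\mathbb{P}\ell(\theta^*) = 0$ and $\nabla_\alpha\mathbb{P}\ell(\theta^*) = -\gamma_n(L_n + \delta I_n)\alpha^*$, since only the $\alpha$-block carries the quadratic penalty. Substituting into Assumption \ref{assump:rsc} and adding back the deterministic quadratic part, the linear contributions cancel exactly and I obtain
$$
\mathbb{P}[\mathcal{L}(\hat\theta) - \mathcal{L}(\theta^*)] \ge c\|\hat\theta - \theta^*\|^2 + \tfrac{1}{2}\gamma_n(\hat\alpha - \alpha^*)'(L_n + \delta I_n)(\hat\alpha - \alpha^*),
$$
whose second summand is nonnegative and can be dropped for the $\ell_1$-rate argument, leaving a clean quadratic curvature.

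The final step is to certify the cone condition of Assumption \ref{assump:comp} and invoke compatibility. I would introduce the sparse oracle $\bar\beta$ with $\bar\beta_S = \beta^*_S$ on $S = \mathrm{supp}(\beta^0)$ and $\bar\beta_{S^c} = 0$, which is $s$-sparse by Assumption \ref{assump:sparsity}. Applying the triangle inequality separately to $\|\cdot\|_1$ and $\|J_p\cdot\|_1$ in the penalty slack, combined with the empirical-process bound, forces $\hat\theta - \theta^*$ into the cone $\|\beta_{S^c}\|_1 + (\gamma_p/\lambda)\|J_p\beta\|_1 \le \|\alpha\|_1 + 3\|\beta_S\|_1$ up to an $O_p(\gamma_p\|J_p\bar\beta\|_1/\lambda)$ slack --- exactly the radius $M^*$ written into Assumption \ref{assump:rsc}. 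Compatibility then converts $\|\hat\alpha - \alpha^*\|_1/2 + \|\hat\beta_S - \beta^*_S\|_1$ into a multiple of $\|\hat\theta - \theta^*\|\sqrt{s}/\phi(s)$, and solving the resulting inequality of the shape $c x^2 \le K\lambda x\sqrt{s}/\phi(s) + (\gamma_p/\lambda)\|J_p\beta^*\|_1\cdot x$ yields the stated rate, after bounding $\|J_p\bar\beta\|_1$ by $\|J_p\beta^*\|_1$ plus negligible terms absorbed via Assumption \ref{assump:scaling}(ii). The main obstacle I anticipate is the cone-inclusion step: because $\|J_p\cdot\|_1$ is only a seminorm, error components in the null space of $J_p$ can only be tracked through $\|\beta\|_1$, and the intercept block $\alpha$ further mixes into the triangle-inequality bookkeeping, so verifying the cone at the correct scale $M^*$ requires a simultaneous four-term accounting that is the delicate heart of the argument.
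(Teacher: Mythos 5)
Your overall architecture --- basic inequality, control of the empirical process at rate $\rho = O(\sqrt{\log p/n})$, a margin/curvature lower bound obtained from Assumption \ref{assump:rsc} plus first-order optimality of $\theta^*$ for $\mathbb{P}\mathcal{L}$, then the cone argument, compatibility, and the conjugate (quadratic) inequality --- is the same as the paper's. Your treatment of the empirical process differs in a minor, acceptable way: you linearize and apply an $\ell_\infty$--$\ell_1$ H\"older bound on $(\mathbb{P}_n-\mathbb{P})\nabla\ell(\theta^*)$, whereas the paper exploits the fact that for a canonical GLM with fixed design the empirical process is \emph{exactly} linear in $\theta$, namely $\nu_n(\theta)-\nu_n(\theta^*)=n^{-1}\sum_i(Y_i-\mu_i)\bigl[(\alpha_i-\alpha_i^*)+x_i'(\beta-\beta^*)\bigr]$, and bounds it term-by-term with a sub-Gaussian (or Bernstein) inequality for the $\alpha$-block and an entropy/Dudley bound plus a union bound over $j$ for the $\beta$-block (Lemmas \ref{lemma:setT} and \ref{lemma:setT-subexp}); there is no quadratic remainder to absorb. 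Since $\|\theta-\theta^*\|_1$ is dominated by $\|\alpha-\alpha^*\|_1+\mathcal{R}(\theta-\theta^*)$, your H\"older bound implies the normalization the paper needs, so this part is fine.

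The genuine gap is the absence of a localization step. Assumption \ref{assump:rsc} grants restricted strong convexity only for $\theta$ within $\ell_1$-type distance $M^*$ of $\theta^*$, yet your sketch applies the curvature bound directly to $\hat\theta$; nothing in the argument up to that point establishes $\|\hat\alpha-\alpha^*\|_1+\mathcal{R}(\hat\theta-\theta^*)\le M^*$, so the invocation of Assumption \ref{assump:rsc} is circular as written. The paper resolves this with the standard convexity device: set $t=M^*/\bigl(M^*+\|\hat\alpha-\alpha^*\|_1+\mathcal{R}(\hat\theta-\theta^*)\bigr)$ and $\tilde\theta=t\hat\theta+(1-t)\theta^*$, which lies in the $M^*$-ball by construction and still satisfies the basic inequality by convexity of the objective; one then shows $\|\tilde\alpha-\alpha^*\|_1+\mathcal{R}(\tilde\theta-\theta^*)\le M^*/2$ in both cases (slack-dominated and cone-dominated), which via the identity $\|\hat\alpha-\alpha^*\|_1+\mathcal{R}(\hat\theta-\theta^*)=t^{-1}\bigl[\|\tilde\alpha-\alpha^*\|_1+\mathcal{R}(\tilde\theta-\theta^*)\bigr]$ forces $\hat\theta$ itself into the ball, after which the whole argument is rerun at $\hat\theta$ to extract the rate. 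Without this (or an equivalent peeling argument) the proof does not close. Your two-case ``slack versus cone'' accounting and the final quadratic inequality otherwise match Case I and Case II of the paper's proof.
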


The proof, given in the Supplementary Material, follows a similar argument as the proof for generalized sparse additive models in \citet{gsam2019}.
A key difference in our theory is handling the $n$-dimensional intercept term $\alpha$, and its corresponding $\ell_2$ penalty.
We also prove the result for outcome distributions which are not sub-Gaussian, such as Poisson and exponential data. 

This result allows for consistent estimation of the target parameters $\beta^*$ and $\alpha^*$.
In the Supplementary Material, we show that $\beta^*$ is the same as $\beta^0$, and the bias of $\alpha^*$ relative to $\alpha^0$ is controlled.
This allows for us to obtain valid inference for the true $\beta^0$, under our procedure which we describe in the next section.

\subsection{Inference}

We now describe a statistical inference procedure for the $\beta$ parameters in the \texttt{glm-funk} model.
We are specifically interested in testing the individual association between the outcome $y$ and feature $X_j$, conditional on other features $X_{-j}$.
This corresponds to the null hypotheses $H_{0,j}: \beta_j = 0$ for $j = 1, \dots, p$.
Classical inference theory does not directly apply in the high-dimensional setting.
Instead, high-dimensional inference approaches generally either involve post-selection procedures, potentially via sample splitting \citep{wasserman2009high, lee2016exact}, or constructing an asymptotically unbiased estimator from the optimal solution $\hat{\beta}$ \citep{dezeure2015high}. 
We focus on the latter approach, known as de-biasing.

Several de-biasing procedures have been developed, including the low-dimensional projection estimator by \citet{zhang2014confidence}, the ridge projection estimator by \citet{buhlmann2013statistical}, and the desparsified lasso estimator by \citet{van2014asymptotically}. 
These methods generally consider regression models with an ordinary ridge or lasso penalty only.
The Grace test by \citet{zhao2016significance} specifically provides inference for linear regression with the $\ell_2$ Laplacian penalty.
However, this method does not account for unit-level networks, and does not extend to the case of generalized linear models. 

We consider the de-biased estimator of \citet{javanmard2013confidence}, which easily extends to generalized linear models.
Our de-biased estimator is defined as:
\begin{align*}
\hat{b} 
&= \hat{\beta} - n^{-1} M X'(\mu(\hat{\alpha} + X\hat{\beta}) - y),
\end{align*}
where $M$ is the inverse of $\hat{\Sigma} := \frac{1}{n} \nabla^2 \ell(\hat{\alpha} + X\hat{\beta})$.
The sandwich estimator of the variance-covariance matrix may also be used here.
In the case of Gaussian linear models, $\hat{\Sigma}$ requires a consistent estimator of $\sigma$, the noise standard deviation; the scaled lasso estimator of \citet{sun2012scaled} can be used to obtain such an estimate.
$M$ is computed by solving an optimization problem in which $\| \hat{\Sigma}M - I_p \|_\infty$ is minimized.
Specifically, for each $j = 1, \dots, p$, the $j$-th column of $M$, $m_j$, is defined as the solution to
$$
\min_{m \in \mathbb{R}^p} m'\hat{\Sigma}m \hspace{10pt} \text{subject to} \hspace{10pt} \|\hat{\Sigma}m - e_j\|_\infty \le q,
$$
where $e_j \in \mathbb{R}^p$ is the $j$-th basis vector.
The next result shows that under the assumptions described previously, we obtain an asymptotic distribution suitable for inference.
\begin{theorem}[Asymptotic Normality]
\label{thm:inference}
Under the conditions of Theorem \ref{thm:consistency}, as $n \longrightarrow \infty$,
$$
\sqrt{n} \left( \hat{b} - \beta^0 \right) \longrightarrow_d N\bigg(0, n^{-1} M \mathbb{E} \left[ \nabla \ell(\alpha^0 +  X\beta^0) \nabla \ell(\alpha^0 + X\beta^0)' \right] M \bigg).
$$
\end{theorem}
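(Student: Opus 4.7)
The plan is to follow the standard de-biasing argument adapted to the doubly-penalized setting, with the extra care needed to handle the $n$-dimensional intercept $\alpha$ and the bias of $\beta^*$ relative to $\beta^0$. I would begin by substituting the definition of $\hat{b}$ and Taylor-expanding the score $\frac{1}{n}X'(\mu(\hat\alpha+X\hat\beta)-y)$ around the true parameters $(\alpha^0,\beta^0)$, using Assumption~\ref{assump:lip}(iv) to obtain
\begin{align*}
\tfrac{1}{n}X'(\mu(\hat\alpha+X\hat\beta)-y)
&= \tfrac{1}{n}X'(\mu(\alpha^0+X\beta^0)-y) \\
&\quad + \hat\Sigma_{\beta\beta}(\hat\beta-\beta^0) + \hat\Sigma_{\beta\alpha}(\hat\alpha-\alpha^0) + r_n,
\end{align*}
where $\hat\Sigma_{\beta\beta}=\tfrac{1}{n}X'\operatorname{diag}(\mu'(\hat\alpha+X\hat\beta))X$, $\hat\Sigma_{\beta\alpha}=\tfrac{1}{n}X'\operatorname{diag}(\mu'(\hat\alpha+X\hat\beta))$, and the remainder $r_n$ is quadratic in the parameter deviations. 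Multiplying through by $-M$ and rearranging yields the key decomposition
\begin{equation*}
\sqrt{n}(\hat b-\beta^0) = -\tfrac{1}{\sqrt n} M X'(\mu(\alpha^0+X\beta^0)-y) + \sqrt{n}(I-M\hat\Sigma_{\beta\beta})(\hat\beta-\beta^0) - \sqrt n\, M\hat\Sigma_{\beta\alpha}(\hat\alpha-\alpha^0) - \sqrt n\, M r_n.
\end{equation*}

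Next I would argue that the first term on the right is the asymptotically Gaussian piece and that the remaining three terms are $o_p(1)$. For the CLT on the leading term: since $\mu(\alpha^0+X\beta^0)-y$ has mean zero entries with the tails specified by Assumption~\ref{assump:tails}, a Lindeberg/Lyapunov-type CLT applied row-wise in $M X'$ delivers convergence in distribution to $N\!\bigl(0,\,n^{-1}M\,\mathbb{E}[\nabla\ell(\alpha^0+X\beta^0)\nabla\ell(\alpha^0+X\beta^0)']\,M\bigr)$; Assumption~\ref{assump:designL2} controls the Lyapunov ratios. For the bias term $\sqrt n(I-M\hat\Sigma_{\beta\beta})(\hat\beta-\beta^0)$, I would use the H\"older bound
\begin{equation*}
\|\sqrt n(I-M\hat\Sigma_{\beta\beta})(\hat\beta-\beta^0)\|_\infty \le \sqrt n\,\|I-M\hat\Sigma_{\beta\beta}\|_\infty\,\|\hat\beta-\beta^0\|_1,
\end{equation*}
combined with (a) the constraint $\|\hat\Sigma M-I\|_\infty\le q$ built into the construction of $M$ with a tuning choice $q \asymp \sqrt{\log p/n}$, (b) Theorem~\ref{thm:consistency} giving $\|\hat\beta-\beta^*\|_1=O_p(\lambda+\gamma_p\lambda^{-1}\|J_p\beta^*\|_1)$, and (c) the supplement's result that $\beta^*=\beta^0$. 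The sparsity rate in Assumption~\ref{assump:sparsity} together with the penalty scaling in Assumption~\ref{assump:scaling} then makes the product $o_p(1)$.

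The Taylor remainder $\sqrt n\, M r_n$ is handled by Lipschitz continuity of $\mu'$ (Assumption~\ref{assump:lip}(iv)) and the boundedness of $X$ (Assumption~\ref{assump:designL2}), so that $\|r_n\|_\infty$ is bounded by a constant times $(\|\hat\beta-\beta^0\|_1+\|\hat\alpha-\alpha^0\|_1)^2/n$, which is $o_p(n^{-1/2})$ under the established rates. The main obstacle, and the step I would spend the most effort on, is the cross term $\sqrt n\, M\hat\Sigma_{\beta\alpha}(\hat\alpha-\alpha^0)$, because $\alpha$ is $n$-dimensional, $\hat\alpha$ is only consistent for the biased target $\alpha^*$, and the bound $\|\hat\alpha-\alpha^0\|_1$ is not directly small. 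To dispatch it I would split $\hat\alpha-\alpha^0=(\hat\alpha-\alpha^*)+(\alpha^*-\alpha^0)$, apply Theorem~\ref{thm:consistency} to the first piece, and invoke the supplementary bias control on $\alpha^*-\alpha^0$ that follows from Assumption~\ref{assump:scaling}(iii); then use $\|M\hat\Sigma_{\beta\alpha}\|_\infty$ bounded via the KKT-style constraint on $M$ together with $|\mu'|\le C_\mu$ and $|X_{ij}|\le R$ to conclude that this cross term is also $o_p(1)$. Collecting the four pieces and applying Slutsky's theorem gives the stated asymptotic normality.
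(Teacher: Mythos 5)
Your overall architecture matches the paper's: a de-biasing decomposition into a leading score term that is asymptotically Gaussian, a $[M\hat\Sigma-I](\hat\beta-\beta^0)$ term killed by H\"older plus the construction of $M$ and the identity $\beta^*=\beta^0$, a Taylor remainder, and an $\alpha$-discrepancy term dispatched by splitting $\hat\alpha-\alpha^0=(\hat\alpha-\alpha^*)+(\alpha^*-\alpha^0)$ and invoking Theorem~\ref{thm:consistency} together with the target-bias lemma. You also correctly single out the $\alpha$-cross term as the delicate step. However, there is one genuine gap, and it comes precisely from the choice to Taylor-expand the score jointly around $(\alpha^0,\beta^0)$ to second order. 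Your remainder bound, a constant times $(\|\hat\beta-\beta^0\|_1+\|\hat\alpha-\alpha^0\|_1)^2/n$ being $o_p(n^{-1/2})$, is not supported by the established rates: $\hat\alpha$ is consistent only for the \emph{biased} target $\alpha^*$, and Lemma~\ref{lemma:target} only gives $\|\alpha^*-\alpha^0\|_2=O_p(n^c)$ for some $c\in(0,\tfrac12)$, so $\|\hat\alpha-\alpha^0\|_1$ can be as large as order $n^{1/2+c}$ and the quadratic term in the $\alpha$-direction does not vanish after multiplying by $\sqrt{n}$. In short, a second-order expansion in $\alpha$ produces a remainder driven by $\|\hat\alpha-\alpha^0\|^2$, which is not small.

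The paper avoids this by never expanding in $\alpha$ at all: it Taylor-expands $\nabla_\beta\ell$ only in $\beta$, around the point $\hat\alpha+X\beta^0$, so the quadratic remainder involves only $\|\hat\beta-\beta^0\|_2^2$ (which is genuinely small by Theorem~\ref{thm:consistency} and $\beta^*=\beta^0$), and then compares $X'(y-\mu(\hat\alpha+X\beta^0))$ to $X'(y-\mu(\alpha^0+X\beta^0))$ using a single first-order Lipschitz bound $\|\mu(\hat\alpha+X\beta^0)-\mu(\alpha^0+X\beta^0)\|_2\le C_\mu\|\hat\alpha-\alpha^0\|_2$, which is linear in $\hat\alpha-\alpha^0$ and is then controlled exactly as in your cross-term argument. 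To repair your proof you should adopt this organization: keep the expansion in $\beta$ only, and treat the $\alpha$-discrepancy as a shift of the score evaluated at $\beta^0$ rather than as part of a joint Taylor expansion. (One further caution: even the linear-in-$\alpha$ term only scales correctly if the exponent from Assumption~\ref{assump:scaling}(iii) is small enough relative to the $\sqrt{n}$ normalization, so the rate bookkeeping there deserves explicit care rather than the blanket ``$o_p(1)$'' you and, for that matter, the paper assert.)
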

In the proof, given in the Supplementary Material, we establish that the target $\beta^*$ is the same as the true $\beta^0$, and derive an asymptotic rate for $\| \alpha^* - \alpha^0 \|_1$. 
We then apply Theorem \ref{thm:consistency} to show the result.
As a result, even though our target parameter $\theta^*$ is different from $\theta^0$, Theorem \ref{thm:inference} facilitates valid inference for the true regression parameters $\beta^0_j$.
Specifically, we use the corresponding test statistic:
$$
T_j = \cfrac{\sqrt{n} \hat{b}_j}{ [ M \hat{\Sigma} M ]_{jj}^{1/2} },
$$
and 100(1 - $\alpha$)$\%$ confidence interval:
$$
\hat{b}_j \pm q_{(1 - \frac{\alpha}{2})} n^{-1/2} [ M \hat{\Sigma} M ]_{jj}^{1/2}.
$$

\section{Analysis of King County COVID-19 Data}
\label{p2sec:covid}

In this section, we analyze novel coronavirus (COVID-19) death counts in King County, WA\footnote{\url{https://www.kingcounty.gov/depts/health/covid-19/data/daily-summary/extracts.aspx}}, obtained September 14, 2020 and containing data since the beginning of the pandemic. We incorporate geographical, demographic and socioeconomic information from King County GIS Open Data\footnote{\url{https://www.kingcounty.gov/services/gis/GISData.aspx}} along with the list of long-term care facilities being monitored for COVID-19\footnote{\url{https://www.kingcounty.gov/depts/health/covid-19/data/~/media/depts/health/communicable-diseases/documents/C19/LTCF-list.ashx}}.
We model the spatial domain as an unweighted graph whose nodes are  the ZIP codes (83 total for which COVID-19 data is available) in King County; two nodes are connected if they have overlapping borders. The outcome of interest is the total number of residents at each ZIP code who have died due to COVID-19. 

We conduct two analyses both using a Poisson generalized linear model with ZIP code population as an offset, and compare results from our models with those from the state-of-the-art BYM2 model \citep{riebler2016intuitive}, implemented with integrated nested Laplace approximation (INLA) \citep{rue2009approximate, lindgren2015bayesian}. 
In our first analysis (Section~\ref{sec:covid1}), we consider a small number of covariates without including the feature kernel penalty $P(G_p,\beta)$, i.e., we set $\gamma_p = 0$.
In our second analysis (Section~\ref{sec:covid2}), we include a larger number of covariates and demonstrate the utility of the feature kernel penalty, $P(G_p,\beta)$ in fusing the parameters for related covariates.

\subsection{COVID-19 model without feature network penalty}
\label{sec:covid1}

Covariates included in this analysis are race distribution (proportions of Black, White and Asian populations), median household income, education status (proportion of residents with college degree or above), age distribution (proportions of residents in four age groups: 18-29, 30-44, 45-59 and over 60), proportion of residents with medical insurance, number of hospitals, population density, number of King County Metro transit stops (as proxy for residential vs commercial neighborhood), and the number of nursing homes being monitored, each summarized on the ZIP code level. 

Covariates represented by proportions of different categories, such as race and age, are relative abundances. To mitigate the effect of spurious correlations in these \emph{compositional data}, we use the {additive} log-ratio transformation \citep{aitchison1982statistical}, omitting one category (e.g. age group below 18 and races other than Black, White or Asian). Figure~\ref{fig:exploratory} presents the distribution of race (untransformed), income, nursing homes and the number of deaths. The plots highlight correlations between race and income, and between the number of nursing homes and deaths.

\begin{figure}[t]
    \centering
    \includegraphics[width=7cm]{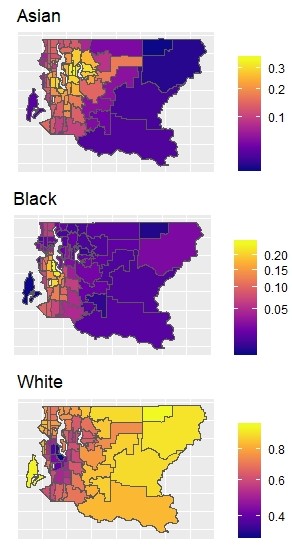}
    \includegraphics[width=7cm]{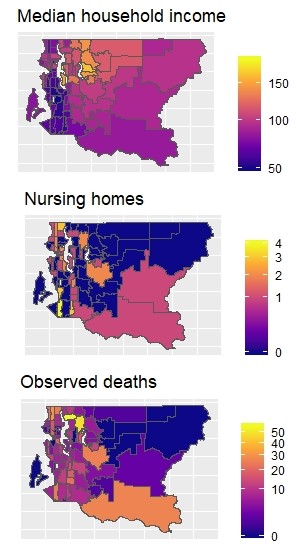}
    \caption{Distribution of race (proportions), income (median), nursing homes (count) and COVID-19 deaths by ZIP code.}
    \label{fig:exploratory}
\end{figure}

In addition to the `full' \texttt{glm-funk} model, which utilizes the spatial graph structure and the above covariates, and the `null' spatial smoothing model with no covariates visualized in Section~\ref{p2sec:intro}, we consider two benchmark models that ignore the spatial graph structure and treat all ZIP codes as independent.
One removes the graphical fusion term and assigns a common intercept to all regions, similar to a traditional Poisson GLM.
The other fits individual intercepts penalized towards zero, i.e. where $L_n=0$. 
The tuning parameters $\gamma_n$ and $\lambda$ are selected via 10-fold cross-validation, to minimize the Poisson negative log-likelihood.
We randomly assign the cross-validation folds under the constraint that no adjacent regions can be assigned to the same test set. 
This is due to the relatively small sample size and the variability in the observations, especially for regions that are far apart.
We use the estimates from the independent model with a common intercept as initial values for optimization when fitting the other two models, to obtain more stable solutions from the optimization with limited sample size. 

We also compare the proposed model with the BYM2 model \citep{riebler2016intuitive}, which is a modified version of the Besag-York-Molli{\'e} (BYM) model \citep{besag1991bayesian}. The BYM2 model specifies the linear predictor as $\log\mathbb{E}[Y_i]=\log P_i+\alpha+X_i\beta+\left[\sqrt{\rho/s}\cdot\phi_i+\sqrt{1-\rho}\cdot u_i \right]\sigma$, where $P_i$ is the offset, $\phi_i$ corresponds to the spatially correlated errors, and $u_i$ reflects non-spatial heterogeneity. For the error term, $\sigma$ is the overall standard deviation, $\rho\in[0,1]$ controls the proportion of spatial and non-spatial errors, and the scaling factor $s$ is directly determined by the graph Laplacian such that $\text{Var}(\phi_i/\sqrt{s})\approx 1$ for each $i$.
We conduct Bayesian inference for this model with the \texttt{R-INLA} package \citep{lindgren2015bayesian}. 

Figure~\ref{fig:rslts} visualizes the residuals from each model fitted with the full data, and reports their average negative log-likelihood across the 10 cross-validation sets, with a quantile-based 95\% confidence interval.
We also report the average root mean-squared error (RMSE) to compare their prediction accuracy.
Roughly, the residuals reflect the bias of each model, while the negative log-likelihood and RMSE show the overall model prediction quality. 
The full \texttt{glm-funk} model achieves better model predictions than other models reflected by both negative log-likelihood and RMSE, since it accounts for the similarity between adjacent regions based on the spatial graph structure, trading off some region-specific bias in a data-adaptive way. 
BYM2 has a relatively high loss and prediction error due to the large variability in its predictions (ranging from 14.3 to beyond 1000 across 10 cross-validation folds), despite its smaller bias than \texttt{glm-funk}. 
For this reason, we report its median negative log-likelihood and RMSE instead of the average across cross-validation folds.
Among all frequentist models, the independence model with individual intercepts achieves lower biases in its predictions by allowing each region to have an individual baseline risk, but its overall bias-variance tradeoff is suboptimal compared to \texttt{glm-funk}.

\begin{figure}[t]
    \centering
    \includegraphics[width=16cm]{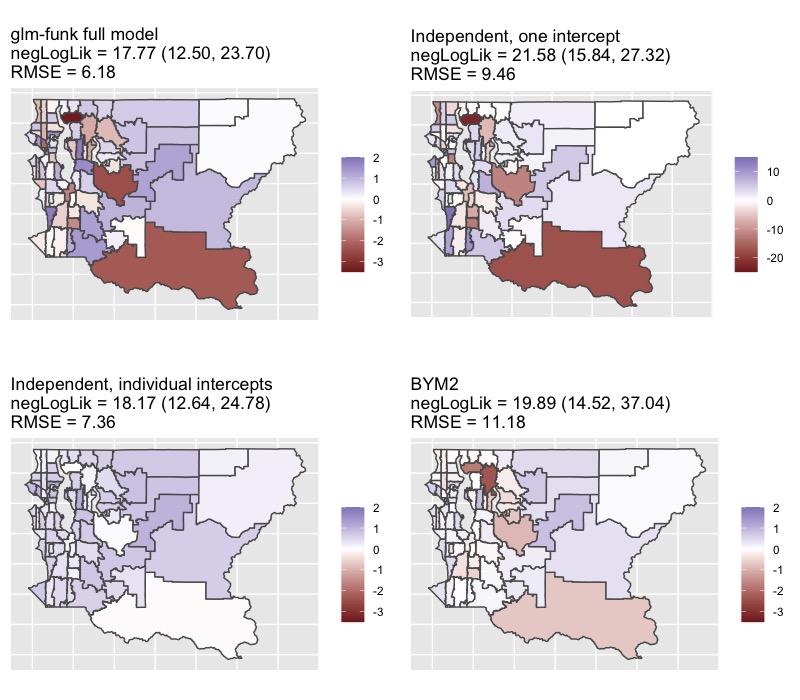}
    \caption{Residuals from each model, with cross-validated negative log-likelihood (along with 95\% CI across 10 cross-validation folds) and prediction RMSE reported in the titles. We winsorized the negative log-likelihood for BYM2, replacing the minimum (and maximum) values with the second smallest (and largest) values to mitigate the impact of outliers.}
    \label{fig:rslts}
\end{figure}

In addition to cross-validated negative log-likelihood and RMSE, we also compare the models in terms of estimated effect sizes (rate ratios). Table~\ref{tab:debiased} shows the effect sizes (posterior mean is used for the BYM2 model) along with 95$\%$ confidence intervals (for non-Bayesian models) or credible intervals (for BYM2) of the covariates in each model. 
The effect sizes are reported on the scale of standardized and transformed covariates.
The full \texttt{glm-funk} model results highlight the proportion of senior population and the number of nursing homes 
as being significant covariates. 
The independent model with one intercept identifies the majority of covariates to be statistically significant, except for the proportion of White population and the number of transit stops. 
The independent model with region-specific intercepts, in contrast, does not find any covariate effect to be significant. 
Such observation aligns with our intuition that the one-intercept independence model attributes more heterogeneity in the outcome to the covariates, while the individual-intercept model captures more variability through the spatially-varying intercepts. 
\texttt{glm-funk} achieves a balance between them by specifying region-specific, but spatially-fused intercepts.
The significance of senior population and the number of nursing homes identified by \texttt{glm-funk} matches our expectation and knowledge for COVID-19, and the correlation between the number of deaths and the number of nursing homes shown in Figure~\ref{fig:exploratory}.
The BYM2 model does not identify any significant covariates, and tends to provide wider credible intervals compared to the frequentist methods; this reflects greater uncertainty in the estimates made by the BYM2 model, especially with small to moderate sample sizes as in our example. 

\begin{figure}[t]
    \centering
    \includegraphics[height=0.3\textheight]{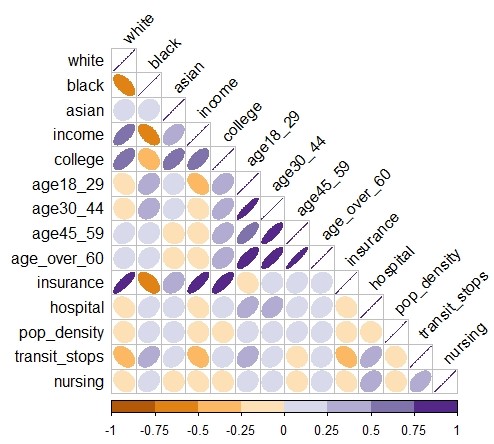}
    \caption{Correlation between transformed covariates. Purple and orange color correspond to positive and negative correlation, respectively. The shade and shape reflect the magnitude of correlation.}
    \label{fig:corr}
\end{figure}



\begin{table}
\caption{\label{tab:debiased}
Rate ratios and 95$\%$ confidence/credible intervals (CI/CrI) for each model. \newline Rate ratios correspond to standardized and transformed covariates.}

\scriptsize
\begin{tabular}{l|llll}
\hline
\multirow{2}{*}{}                      & {Full} & {\begin{tabular}[]{@{}c@{}} Independent,\\one intercept\end{tabular}} & {\begin{tabular}[]{@{}c@{}}
Independent,\\individual intercepts\end{tabular}} & {BYM2} \\ 
& RR (95\% CI)      & RR (95\% CI)                 & RR (95\% CI)    & RR (95\% CrI)     \\ \hline
Race                                   & \multicolumn{4}{l}{}   \\ 
\multicolumn{1}{r|}{White}   
& 0.99 (0.95, 1.04) & 1.00 (0.96, 1.05) & 1.00 (0.96, 1.04) & 0.73 (0.24, 2.23) \\ 
\multicolumn{1}{r|}{Black}            
& 1.00 (0.94, 1.06) & 0.67 (0.63, 0.72) & 1.00 (0.94, 1.06) & 0.62 (0.27, 1.38) \\ 
\multicolumn{1}{r|}{Asian}            
&1.00 (0.94, 1.06) & 1.42 (1.33, 1.52) & 1.00 (0.94, 1.06) & 1.71 (0.72, 4.13) \\ [2mm] 
Income                                 
& 0.99 (0.95, 1.04) & 0.90 (0.86, 0.94) & 1.00 (0.95, 1.04) & 0.46 (0.10, 2.14) \\ [2mm] 
College education                      
& 1.00 (0.96, 1.04) & 0.63 (0.60, 0.65) & 1.00 (0.96, 1.04) & 0.85 (0.18, 4.00) \\ [2mm] 
Age                                    
& \multicolumn{4}{l}{}                               \\ 
\multicolumn{1}{r|}{18-29}            
& 1.01 (0.96, 1.05) & 0.27 (0.26, 0.28) & 1.00 (0.96, 1.05) & 0.56 (0.11, 2.90) \\ 
\multicolumn{1}{r|}{30-44}            
& 1.01 (0.97, 1.06) & 4.03 (3.85, 4.21) & 1.00 (0.96, 1.05) & 0.74 (0.10, 5.64) \\ 
\multicolumn{1}{r|}{45-59}            
&1.01 (0.96, 1.05) & 0.53 (0.51, 0.56) & 1.00 (0.96, 1.05) & 0.95 (0.10, 7.92) \\ 
\multicolumn{1}{r|}{\textgreater{}60} 
&  1.13 (1.08, 1.19) & 2.18 (2.08, 2.28) & 1.01 (0.97, 1.06) & 3.14 (0.58, 17.73) \\ [2mm] 
Medical insurance                      
& 1.00 (0.95, 1.04) & 0.83 (0.80, 0.87) & 1.00 (0.96, 1.04) & 1.40 (0.34, 5.69) \\ [2mm] 
Hospitals                              
& 1.01 (0.94, 1.08) & 1.12 (1.05, 1.19) & 1.00 (0.93, 1.07) & 1.00 (0.63, 1.60) \\ [2mm] 
Population density                     
& 0.92 (0.76, 1.10) & 0.59 (0.41, 0.86) & 0.95 (0.78, 1.15) & 0.76 (0.42, 1.21) \\ [2mm] 
Transit stops                          
& 1.02 (0.96, 1.09) & 0.98 (0.92, 1.04) & 1.00 (0.94, 1.06) & 0.91 (0.49, 1.71) \\ [2mm] 
Nursing homes                          
& 1.28 (1.22, 1.36) & 1.28 (1.20, 1.36) & 1.04 (0.97, 1.10) & 1.37 (0.88, 2.13) \\ [2mm] \hline
\end{tabular}
\end{table}

\subsection{COVID-19 model with feature network penalty}
\label{sec:covid2}

In our first analysis in Section~\ref{sec:covid1}, we identified a subset of covariates to include in the model based on the current knowledge of COVID-19 pandemic. This is clearly the right approach if such knowledge exists. However, even in well-studied cases, such as COVID-19, we may not be able to fully identify the relevant covariates that need to be included in the analysis. 

To illustrate the full capabilities of our doubly-regularized approach, in this section we take such an approach and include additional features in the model.
More specifically, in addition to the covariates considered in Section~\ref{sec:covid1}, our expanded analysis includes three types of additional features, namely, environmental variables (proportion of medium and high basins, erosion hazard and landslide hazard), public facilities (number of schools, solid waste facilities and veterans, seniors and human services levy) and public safety facilities (number of fire stations and police stations).
These covariates are less related to COVID-19 deaths based on our current knowledge, and so we investigate whether our model estimates are robust against the inclusion of potentially weak predictors.

The proposed \texttt{glm-funk} framework with both unit and feature network penalties allows us to encourage similarity among coefficients associated with covariates in each of the three new groups of features (environmental, public facility and public safety) through a feature network $G_p$ in which features in the same category are connected.
We report results for both the $\ell_1$ and the $\ell_2$ fusion penalty.
Table~\ref{tab:debiased2} summarizes the estimates, confidence intervals and p-values for these models. 
Estimates and credible intervals from the BYM2 model are presented as a comparison. 
Figure~\ref{fig:rslts_high_dim} visualizes the residuals from each model, and reports the corresponding average negative log-likelihood and RMSE on validation sets across the 10-fold cross-validation. 
Since predictions from the BYM2 model have large variability across the 10 cross-validation folds as in Section~\ref{sec:covid1}, we report its median negative log-likelihood and median RMSE instead.

\begin{figure}
    \centering
    \includegraphics[width=16cm]{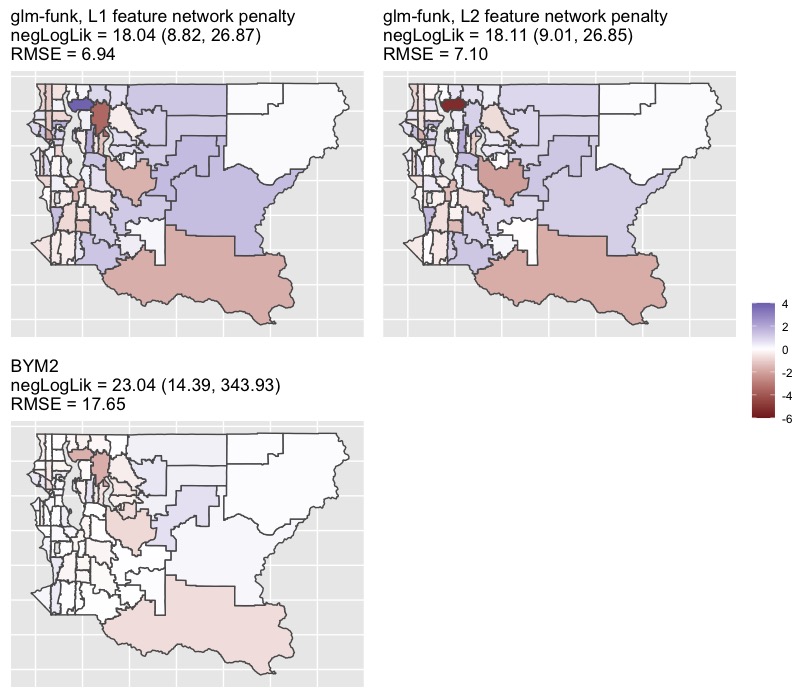}
    \caption{Residuals from each model with added covariates, with cross-validated negative log-likelihood (along with 95\% CI across 10 cross-validation folds) and prediction RMSE reported in the titles. We winsorized the negative log-likelihood for BYM2, replacing the minimum (and maximum) values with the second smallest (and largest) values to mitigate the impact of outliers.}
    \label{fig:rslts_high_dim}
\end{figure}

\begin{table}
\caption{\label{tab:debiased2} Debiased estimates and 95\% confidence/credible intervals \newline from the proposed model with feature network penalty and from the BYM2 model}

\scriptsize
\begin{tabular}{r|lllll}
\hline
\multicolumn{1}{l|}{}                   & \multicolumn{2}{l}{glm-funk, $\ell_1$}                         & \multicolumn{2}{l}{glm-funk, $\ell_2$}                         & BYM2              \\
\multicolumn{1}{l|}{}                   & RR (95\% CI)      & p-value                              & RR (95\% CI)      & p-value                              & RR (95\% CrI)     \\ \hline
\multicolumn{1}{l|}{Race}               &                   &                                      &                   &                                      &                   \\
White                                   & 
0.91 (0.87, 0.95) & \textless{}0.001 & 0.90 (0.86, 0.94) & \textless{}0.001 & 0.74 (0.18, 3.00) \\ 
Black                                   & 
1.09 (1.01, 1.17) & 0.019 & 1.07 (0.99, 1.15) & 0.076 & 0.63 (0.25, 1.55) \\ 
Asian                                   & 
1.00 (0.94, 1.06) & 0.950 & 1.00 (0.94, 1.06) & 0.963 & 1.61 (0.55, 4.82) \\ [2mm]
\multicolumn{1}{l|}{Income}             & 0.96 (0.92, 1.01) & 0.090 & 0.97 (0.92, 1.01) & 0.129 & 0.42 (0.07, 2.71) \\[2mm]
\multicolumn{1}{l|}{College education}  & 1.00 (0.95, 1.05) & 0.972 & 0.99 (0.94, 1.04) & 0.609 & 1.00 (0.09, 10.52) \\[2mm]
\multicolumn{1}{l|}{Age}                &                   &                                      &                   &                                      &                   \\
18-29                                   & 1.02 (0.97, 1.06) & 0.485 & 1.01 (0.97, 1.06) & 0.553 & 0.59 (0.09, 3.89) \\ 
30-44                                   & 1.01 (0.97, 1.06) & 0.524 & 1.03 (0.99, 1.08) & 0.173 & 0.63 (0.06, 6.36) \\ 
45-59                                   & 1.01 (0.96, 1.05) & 0.803 & 1.01 (0.96, 1.05) & 0.815 & 1.17 (0.07, 18.03) \\
\textgreater{}60                        & 1.01 (0.97, 1.06) & 0.596 & 1.02 (0.98, 1.07) & 0.374 & 2.81 (0.32, 25.33) \\ [2mm]
\multicolumn{1}{l|}{Medical insurance}  &  0.92 (0.88, 0.96) & \textless{}0.001 & 0.91 (0.87, 0.95) & \textless{}0.001 & 1.43 (0.25, 8.00) \\ [2mm]
\multicolumn{1}{l|}{Hospitals}          & 0.99 (0.93, 1.07) & 0.873 & 1.01 (0.94, 1.09) & 0.782 & 0.98 (0.56, 1.74) \\ [2mm]
\multicolumn{1}{l|}{Population density} & 0.92 (0.78, 1.09) & 0.352 & 0.88 (0.71, 1.09) & 0.242 & 0.75 (0.38, 1.31) \\ [2mm]
\multicolumn{1}{l|}{Transit stops}      &  1.04 (0.96, 1.13) & 0.341 & 1.05 (0.96, 1.14) & 0.275 & 0.91 (0.34, 2.40) \\ [2mm]
\multicolumn{1}{l|}{Nursing homes}      &  1.37 (1.28, 1.48) & \textless{}0.001 & 1.37 (1.27, 1.47) & \textless{}0.001 & 1.41 (0.80, 2.50) \\ [2mm]
\multicolumn{1}{l|}{Environmental}      &                   &                                      &                   &                                      &                   \\
Basin medium                            & 0.98 (0.88, 1.08) & 0.625 & 0.99 (0.90, 1.10) & 0.881 & 1.13 (0.38, 3.36) \\ 
Basin high                              & 0.97 (0.86, 1.09) & 0.564 & 0.94 (0.83, 1.05) & 0.267 & 0.69 (0.19, 2.42) \\ 
Erosion hazard                          & 0.99 (0.94, 1.04) & 0.800 & 1.01 (0.95, 1.07) & 0.864 & 1.06 (0.46, 2.46) \\ 
Landslide hazard                        & 1.01 (0.93, 1.11) & 0.764 & 0.99 (0.90, 1.08) & 0.809 & 1.18 (0.54, 2.58) \\ [2mm]
\multicolumn{1}{l|}{Public facility}    &                   &                                      &                   &                                      &                   \\
School                                  & 1.00 (0.92, 1.09) & 0.993 & 1.00 (0.92, 1.09) & 0.962 & 1.00 (0.50, 2.04) \\ 
Solid waste                             & 0.99 (0.91, 1.08) & 0.872 & 0.99 (0.91, 1.08) & 0.878 & 1.02 (0.60, 1.75) \\ 
Veteran                                 & 0.98 (0.88, 1.08) & 0.640 & 0.98 (0.89, 1.09) & 0.750 & 0.93 (0.47, 1.83) \\ [2mm]
\multicolumn{1}{l|}{Public safety}      &                   &                                      &                   &                                      &                   \\
Fire                                    & 1.00 (0.92, 1.08) & 0.949 & 1.02 (0.94, 1.10) & 0.647 & 1.08 (0.56, 2.12) \\ 
Police                                  &  1.01 (0.92, 1.11) & 0.803 & 1.01 (0.93, 1.09) & 0.877 & 1.07 (0.56, 2.05) \\ 
\hline
\end{tabular}
\end{table}

As we may expect, the added environmental, public facility and public safety variables are not identified as important predictors of COVID-19 death rates. 
The proposed \texttt{glm-funk} model with both $\ell_1$ and $\ell_2$ feature network penalty outperformed BYM2 model in terms of cross-validated negative log-likelihood as well as RMSE. 
With the inclusion of additional features, the BYM2 model still identifies no important predictor for COVID-19 deaths. 
For the \texttt{glm-funk} models, the statistical significance for the majority of predictors remains unchanged, though the proportion of senior population is no longer a significant covariate for the \texttt{glm-funk} models, and medical insurance coverage and race distribution are identified as significant instead.

This indicates that \texttt{glm-funk} is slightly sensitive to the inclusion of spatially structured (and likely less relevant) variables. 
The importance of race matches our knowledge of COVID-19 burden \citep{price2020hospitalization}, and the insignificance of age (for the $>60$ age group) may be due to the presence of other correlated variables in the model; for example, nursing homes and medical insurance as shown in Figure~\ref{fig:corr}. 

\section{Simulation Studies}
\label{p2sec:sims}
\subsection{Simulation settings}
In this section, we evaluate the performance of the proposed \texttt{glm-funk} models using simulated spatial data.
We generated count responses from the model
\begin{align*}
Y \sim \text{Poisson}( \exp( \alpha + X\beta ) ),
\end{align*}
where $\alpha$ denotes a region-specific random intercept, over a discrete spatial lattice based off the geography of King County and surrounding areas.
The spatial lattice graph of these 204 regions (including ZIP codes where COVID-19 data was unavailable for our data analysis in Section \ref{p2sec:covid}) is shown in Figure~\ref{fig:kcfull} in the Supplementary Material.

Using the method of \citet{rue2005gaussian}, we generated the spatial random effects $\alpha_i$ according to an ICAR model as in Equation~\eqref{eqn:CAR} for $i = 1, \dots, n$, with $\tau = 1$. 
For the fixed effects, we set $\beta_1 = \cdots = \beta_{\frac{s}{2}} = \rho$, $\beta_{\frac{s}{2} + 1} = \cdots = \beta_{s} = -\rho$, and the remaining coefficients to 0; we vary the absolute effect size $\rho$ across simulations.
The feature graph $G_p$ is set to have $2p/s$ disconnected components of size $s/2$ each.
Each component has a single hub node which is connected to the remaining non-hub nodes that do not have any other connections.
In generating the features $X$, each hub node feature is generated as $x_{h} \sim N(0,1)$, with connected non-hub features generated as $x_{nh} \sim N(0.35 x_h, 1)$.
This setup is similar to the one used in the simulations in \citet{li2008network}. 
We fit all models on half of the observations, and reserved the remaining $n/2$ observations as a test set to assess the predictive accuracy.

We consider high-dimensional covariate spaces under the data-generating process specified above.
We also consider the setting where the networks $G_n$ and $G_p$ are not fully informative, to assess the robustness of our method.
To assess the different models implemented, we report the power of the hypothesis test to detect $\beta_j \neq 0$ for the $s$ nonzero coefficients, the Type I error rate among the $p - s$ null coefficients, the test set root mean squared error, and the 95\% coverage probability of confidence intervals for $\beta$. 

\subsection{Simulation results}

For this simulation study, we set $p = 300$ and $s = 20$. 
For $G_n$, we use the King County spatial lattice directly, resulting in $n = 204$ regions.
We consider \citet{li2019rnc}'s RNC model with a lasso penalty (i.e., \texttt{glm-funk} with $\gamma_p = 0$) in order to demonstrate the performance of a model that only accounts for structure among $G_n$.
Although \citet{li2019rnc} does not include inference for high-dimensional $\beta$ parameters, inference for this model can be obtained as a special case of our method.
Similarly, we consider a lasso-penalized Poisson regression as a model that does not account for network structure in either the regions or features.

Results for this simulation setting are shown in Figure \ref{fig:hd_inform}. 
We observe that the \texttt{glm-funk} models outperform the models that do not incorporate the feature network information in terms of power.
Comparing \texttt{glm-funk} models with $\ell_1$ and $\ell_2$ smoothing, we see that the $\ell_1$ model achieves the highest power, which makes sense since the connected $\beta$ coefficients are exactly equal to each other.
For out-of-sample predictive performance, the \texttt{glm-funk} models perform the best when the signal strength of features is high. 
The $\ell_1$ model in particular shows meaningfully better predictions, and also better coverage probability.

\begin{figure}[t]
    \centering
    \includegraphics[scale=0.75]{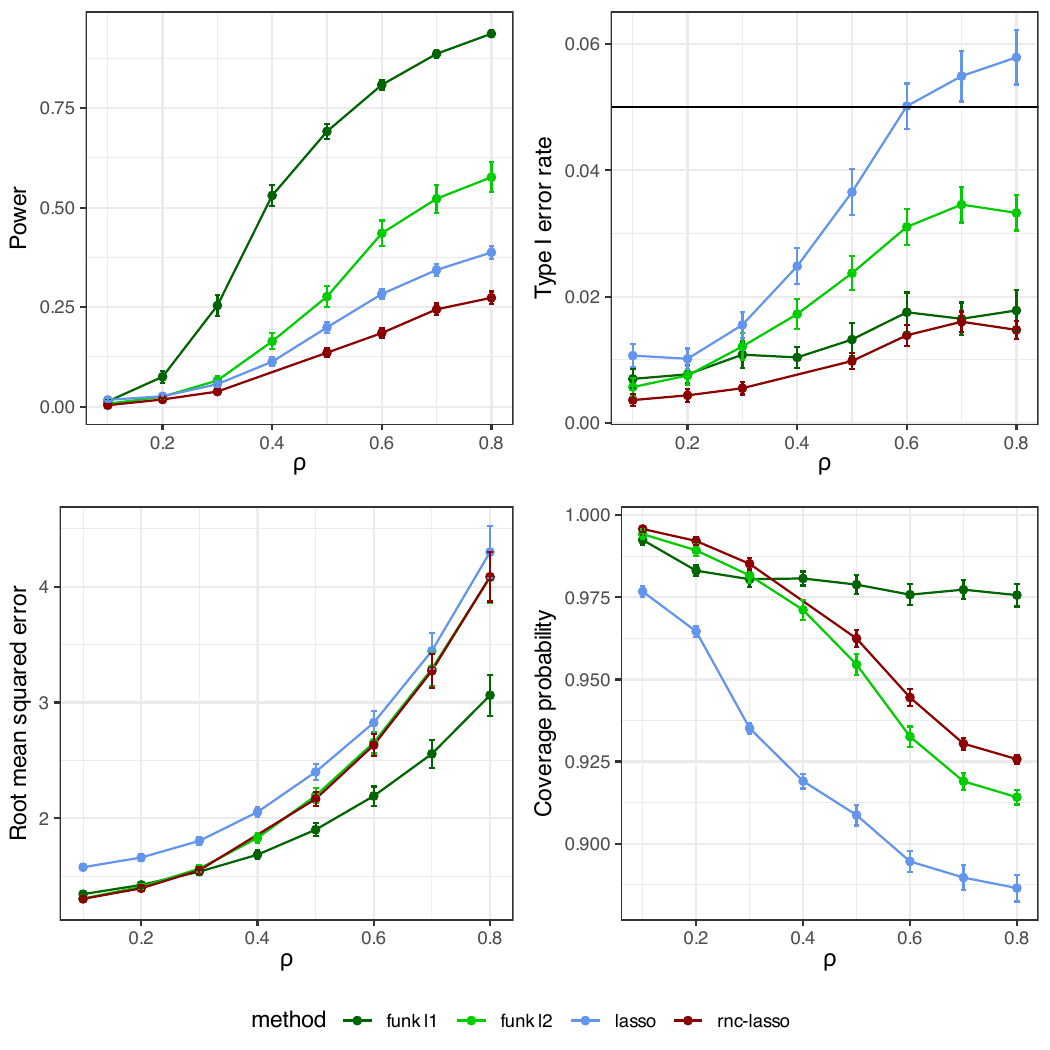}
    \caption{High-dimensional simulation results for fully informative networks. Means over 100 replicates are displayed with standard error bars. Top-left: power, top-right: Type I error rate with nominal 5\% level indicated as horizontal line, bottom-left: test set root mean squared error, bottom-right: 95\% coverage probability of confidence intervals.}
    \label{fig:hd_inform}
\end{figure}

\subsection{Partially informative networks}

We now examine the effect of modeling with partially informative networks in the high-dimensional setting.
For $G_p$, we randomly add intra-component edges to the original network with constant probability 0.002 among all possible edge pairs.
In our simulations, this corresponds to 87 additional edges in $G_p$ (32\% of edges uninformative) on average.
These uninformative edges encourage $\beta$ coefficients from different components to be close to each other.
For $G_n$, we generate the region-specific intercepts independently from a $N(0, 0.24)$ distribution, matching the mean and standard deviation of $\alpha$ in the previous simulation study.
Here, the ICAR model does not apply, and spatial smoothing should not be helpful.

Results for this setting, shown in Figure~\ref{fig:hd_uninform}, indicate that the proposed methods provide robust results when the networks are not fully informative. 
Specifically, the \texttt{glm-funk} models still provide the highest power, though attenuated relative to the fully informative setting.
Their predictive accuracy and coverage probabilities are also reduced, though the $\ell_1$ model still shows an advantage to others.

\begin{figure}[t]
     \centering
     \includegraphics[scale=0.75]{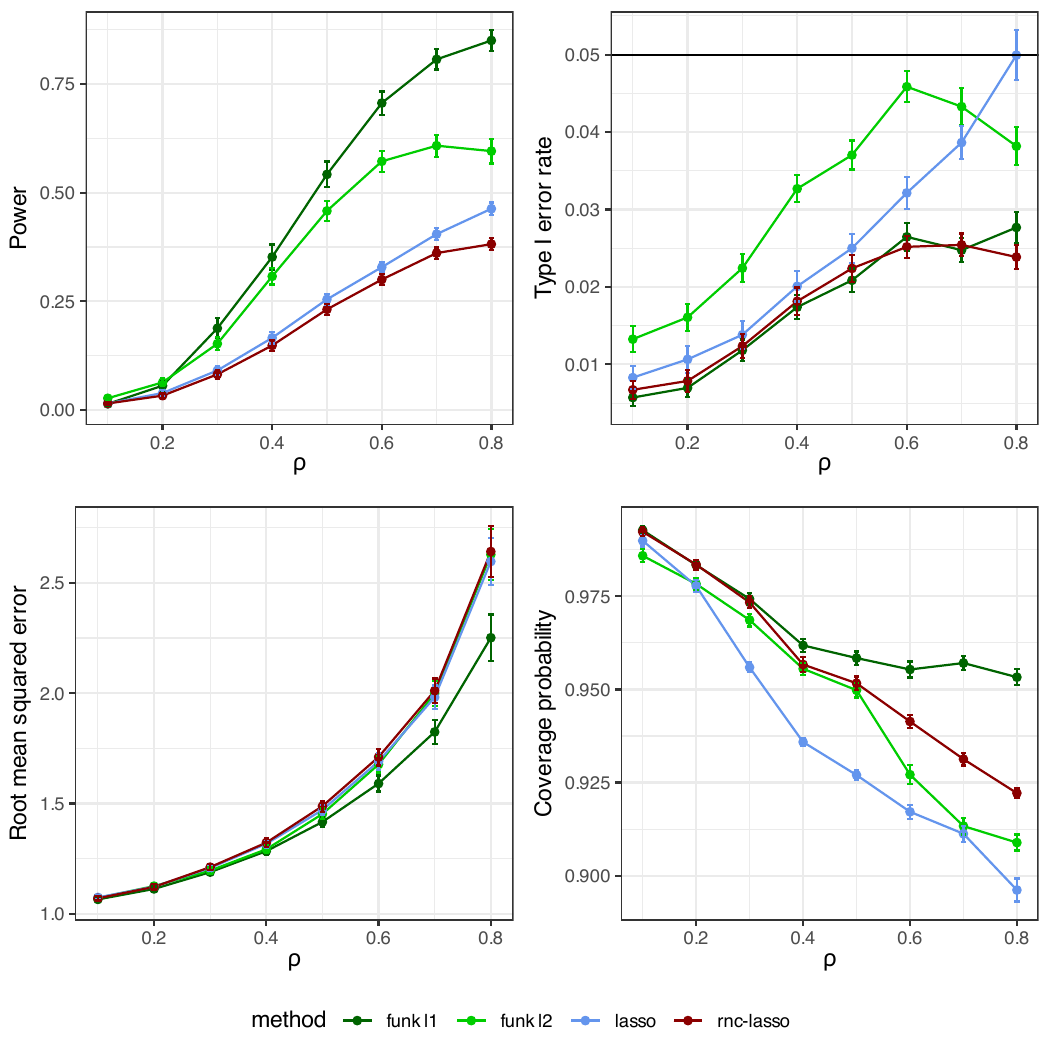}
     \caption{High-dimensional simulation results for partially informative networks. Means over 100 replicates are displayed with standard error bars. Top-left: power, top-right: Type I error rate with nominal 5\% level indicated as horizontal line, bottom-left: test set root mean squared error, bottom-right: 95\% coverage probability of confidence intervals.}
     \label{fig:hd_uninform}
\end{figure}

\section{Discussion}
\label{p2sec:disc}

Motivated by applications in spatial disease modeling, we have developed a new framework for analyzing data with network structure among both observations and features. 
Using doubly regularized generalized linear models, we also obtained valid high-dimensional inference for our model parameters, under potentially informative network structure. 
The proposed methodology---which is implemented in the R package {\tt glmfunk}, available on GitHub---can also be used in other applications involving doubly-structured data. 
Examples include analysis of microbiome data \citep{randolph2018kernel} and multi-view omics data integration \citep{li2018review}.

We applied our methodology to a spatial dataset of COVID-19 death counts, where it demonstrated improved predictive and inferential ability compared to existing methods. 
We showed empirically that spatial smoothing applied to region-specific intercepts results in an appropriate trade-off between region-level bias and attributing outcome variation to covariates.
Our approach also outperformed the spatial Bayesian model that is commonly used for these types of applications, both in out-of-sample prediction and in identifying significant covariates.
Our hypothesis is that \texttt{glm-funk} is better able to adapt to small sample sizes through the tuning of penalty parameters.
Finally, as demonstrated in our simulation studies, \texttt{glm-funk} allows for improved inference in high dimensions while taking advantage of spatial and feature network structure. 


When the networks are informative for the true data-generating process, we expect our method to show improved prediction and inference compared to standard high-dimensional methods.
When the networks are misspecified or uninformative, our theoretical analysis suggests that the method should still achieve consistent estimation and valid inference for the true regression parameters, if appropriate penalty parameter scaling is used.
This is further borne out in the results of our simulation studies.
However, tuning the network penalty parameters to achieve this in practice may be difficult, as na\"ive cross-validation is not guaranteed to be successful, given the dependencies among the observation units. 
An area for future research would be to incorporate cross-validation for correlated data \citep[similar to, e.g.,][]{cvcorrdata2019} that is compatible with our method.
Although we only considered Laplacian and graph incidence matrices in this paper, other kernels can be easily used within the \texttt{glm-funk} model.
Another possibility would be to avoid adding unit-level intercepts, and instead penalizing the fitted values $X\beta$ directly; that is, using the penalty $\gamma_n \beta' X' L_n X \beta$.

\newpage

\bibliographystyle{plainnat}
\bibliography{manuscriptrefs}

\newpage

\begin{center}
{\large\bf SUPPLEMENTARY MATERIAL}
\end{center}

\appendix
\section{Technical proofs}
\label{app:proofs}

{\parindent0pt

Here, we prove the results in Section \ref{p2sec:asymp}.
We begin by comparing the target parameters $\theta^*$ to the true parameters $\theta^0$, showing that $\beta^* = \beta^0$ and establishing a bound on $\| \alpha^* - \alpha^0 \|_2$.
We then focus on estimation of the target $\theta^*$ using the $\ell_1$-regularized $\hat{\theta}$.
We derive tail bounds on the empirical process term under both sub-Gaussian and sub-exponential outcomes, which are then used to show $\hat{\theta} \rightarrow \theta^*$ in $\ell_1$ norm. 
Our proof of Theorem \ref{thm:consistency} is similar to that of Theorem 3 in \citet{gsam2019}, which provides fast rates of convergence for generalized sparse additive models. 
A key step in the proof by \citet{gsam2019}, which differentiates it from a similar one in \citet{buhlmann2011statistics}, is handling an intercept term which is not penalized. 
We extend this to our setting, with an $n$-dimensional intercept term that is $\ell_2$ penalized.
We then prove Theorem \ref{thm:inference}, which shows the validity of our de-biased estimator $\hat{b}$ for inference on the true $\beta^0$. 

\subsection*{Target vs. true parameters}

We first compare the target parameters $\theta^*$ to the true parameters of interest $\theta^0$. 
We start by noting that the target $\beta^*$ is the same as the true $\beta^0$. 
Using this fact, we characterize the difference in the target and true intercepts; that is, $\| \alpha^* - \alpha^0 \|$.

\begin{lemma}
\label{lemma:targetbetas}
The target parameter $\beta^*$ is equal to the true parameter $\beta^0$. 
\end{lemma}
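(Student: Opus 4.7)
The plan is to exploit the fact that the target objective $\mathbb{P}\mathcal{L}$ differs from the ``true'' objective $\mathbb{P}\ell$ only by the term $\frac{1}{2}\gamma_n \alpha'(L_n+\delta I_n)\alpha$, which depends on $\alpha$ alone and not on $\beta$. Consequently $\nabla_\beta \mathbb{P}\mathcal{L}(\theta) \equiv \nabla_\beta \mathbb{P}\ell(\theta)$ for every $\theta$, so the two first-order conditions in the $\beta$-block agree. I would write out the GLM score,
\[
\nabla_\alpha \mathbb{P}\ell(\theta) = \mathbb{E}\bigl[\mu(\alpha+X\beta)-Y\bigr], \qquad \nabla_\beta \mathbb{P}\ell(\theta) = X'\,\mathbb{E}\bigl[\mu(\alpha+X\beta)-Y\bigr],
\]
and record the optimality conditions for the two problems: $\nabla_\alpha\mathbb{P}\ell(\theta^0)=\nabla_\beta\mathbb{P}\ell(\theta^0)=0$ for the true parameter, and $\nabla_\alpha\mathbb{P}\ell(\theta^*)+\gamma_n(L_n+\delta I_n)\alpha^*=0$ together with $\nabla_\beta\mathbb{P}\ell(\theta^*)=0$ for the target parameter.

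Next, I would combine the two FOCs at $\theta^*$. The $\alpha$-condition yields $\mathbb{E}[\mu(\alpha^*+X\beta^*)-Y] = -\gamma_n (L_n+\delta I_n)\alpha^*$, and substituting into the $\beta$-condition produces the population orthogonality relation $X'(L_n+\delta I_n)\alpha^* = 0$. Using correct specification, $\mathbb{E}[Y]=\mu(\alpha^0+X\beta^0)$, the $\alpha$-FOC at $\theta^*$ becomes
\[
\mu(\alpha^*+X\beta^*)-\mu(\alpha^0+X\beta^0) = -\gamma_n (L_n+\delta I_n)\alpha^*.
\]
Projecting this identity onto the column space of $X$ and invoking the orthogonality $X'(L_n+\delta I_n)\alpha^*=0$ shows that the right-hand side vanishes in the $X$-direction. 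By the monotonicity of the GLM mean function $\mu$ guaranteed by Assumption~\ref{assump:lip}(iv,v), this forces $X(\beta^*-\beta^0)=0$ along the relevant component, and hence $\beta^*=\beta^0$.

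The main obstacle is the overparameterization: the loss $\mathbb{P}\ell$ depends on $(\alpha,\beta)$ only through the $n$-dimensional linear predictor $\eta=\alpha+X\beta$, so $\argmin_\theta \mathbb{P}\ell$ is not a singleton and the symbol $\theta^0$ needs a canonical representative. I would resolve this by fixing $\theta^0$ as the data-generating parameters producing $\eta^0=\alpha^0+X\beta^0$, and then checking that the identifiability argument above picks out the same $\beta$-coordinate for $\theta^*$. A minor subtlety is that when $p>n$ the matrix $X$ is not of full column rank, so the final step must be restricted to the support of $\beta^0$ (Assumption~\ref{assump:sparsity}) or phrased via the compatibility condition in Assumption~\ref{assump:comp}; this part should be largely mechanical once the population orthogonality relation is in hand.
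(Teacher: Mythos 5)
Your opening move is exactly the paper's: the penalty $\tfrac12\gamma_n\alpha'(L_n+\delta I_n)\alpha$ does not involve $\beta$, so $\nabla_\beta\mathbb{P}\mathcal{L}\equiv\nabla_\beta\mathbb{P}\ell$ and the two $\beta$-stationarity conditions are the same equation. The paper stops there, writing $0=\nabla_\beta\mathbb{E}[\ell(\alpha^*+X\beta^*)]$ and $0=\nabla_\beta\mathbb{E}[\ell(\alpha^0+X\beta^0)]$ and appealing to convexity. You go further and try to actually close the identification argument, and your intermediate steps are correct and genuinely more informative than the paper's: combining the $\alpha$- and $\beta$-FOCs at $\theta^*$ does yield $X'(L_n+\delta I_n)\alpha^*=0$, and correct specification does give the displacement identity $\mu(\alpha^*+X\beta^*)-\mu(\alpha^0+X\beta^0)=-\gamma_n(L_n+\delta I_n)\alpha^*$. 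You also correctly flag the overparameterization of $(\alpha,\beta)\mapsto\eta=\alpha+X\beta$, which the paper silently ignores.

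The gap is in your last step. From $X'\bigl[\mu(\alpha^*+X\beta^*)-\mu(\alpha^0+X\beta^0)\bigr]=0$, a mean-value expansion gives
\begin{equation*}
X'W(\tilde\eta)\bigl[(\alpha^*-\alpha^0)+X(\beta^*-\beta^0)\bigr]=0,
\end{equation*}
with $W(\tilde\eta)$ a positive diagonal weight matrix (Assumption~\ref{assump:lip}(iv,v)). Monotonicity of $\mu$ does not let you drop the term $X'W(\tilde\eta)(\alpha^*-\alpha^0)$: the orthogonality you derived is $X'(L_n+\delta I_n)\alpha^*=0$, which is a different bilinear form, involves $\alpha^*$ rather than the displacement $\alpha^*-\alpha^0$, and carries no information about $X'W(\alpha^*-\alpha^0)$. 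Indeed, in the Gaussian case one can solve the target problem in closed form and check that $\beta^*=\beta^0$ requires something like $X'K(I+nK)^{-1}\alpha^0=0$ with $K=\gamma_n(L_n+\delta I_n)$, which is not implied by the stated assumptions. So your argument, taken as a self-contained proof, does not go through at the final projection step; to be fair, the paper's two-line proof evaluates the two $\beta$-score equations at different $\alpha$'s and does not resolve this difficulty either, so your attempt has surfaced a real weakness in the lemma rather than merely failed to reproduce a complete argument.
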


\begin{proof}
This follows immediately by examining the $\beta$-optimality conditions for both objective functions,
\begin{align*}
0 &= \nabla_\beta \mathbb{E}[\ell(\alpha^* + X\beta^*)] \\
0 &= \nabla_\beta \mathbb{E}[\ell(\alpha^0 + X\beta^0)],
\end{align*}
and by convexity of the loss function.  
\end{proof}

\begin{lemma}
\label{lemma:target}
Under Assumptions \ref{assump:lip} and \ref{assump:scaling}, $\| \alpha^* - \alpha^0 \|_2 = O_p\left( n^c \right)$ for some $c \in (0, \frac{1}{2})$.
\end{lemma}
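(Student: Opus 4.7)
The plan is to leverage the first-order optimality conditions for $\alpha^*$ and $\alpha^0$ together with the identity $\beta^* = \beta^0$ from Lemma 1. Since only the $\alpha$-penalty distinguishes the target and true objectives, I expect $\alpha^* - \alpha^0$ to be controlled by $\gamma_n(L_n + \delta I_n)\alpha^0$, which is precisely what Assumption~\ref{assump:scaling}(iii) bounds.

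First I would write down stationarity. Because $\alpha^0$ is the unpenalized minimizer in its coordinate, $\nabla_\alpha\mathbb{P}\ell(\alpha^0 + X\beta^0) = 0$; substituting $\beta^* = \beta^0$, the penalized minimizer satisfies
\[
\nabla_\alpha\mathbb{P}\ell(\alpha^* + X\beta^0) + \gamma_n(L_n + \delta I_n)\alpha^* = 0.
\]
Because the $i$-th coordinate of $\nabla_\alpha\mathbb{P}\ell$ depends only on $\alpha_i$, a componentwise mean value theorem applied to the differentiable mean function $\mu$ (Assumption~\ref{assump:lip}) yields a diagonal matrix $D$ with entries $\mu'(\tilde\alpha_i + x_i'\beta^0)$ such that
\[
\nabla_\alpha\mathbb{P}\ell(\alpha^* + X\beta^0) - \nabla_\alpha\mathbb{P}\ell(\alpha^0 + X\beta^0) = D(\alpha^* - \alpha^0).
\]
Subtracting the two stationarity conditions and splitting $\alpha^* = \alpha^0 + (\alpha^* - \alpha^0)$ in the penalty term yields the key identity
\[
\bigl[D + \gamma_n(L_n + \delta I_n)\bigr](\alpha^* - \alpha^0) = -\gamma_n(L_n + \delta I_n)\alpha^0.
\]

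Finally I would invert and take norms. Assumption~\ref{assump:lip}(v) lower-bounds every diagonal entry of $D$ by $1/U'$, and $\gamma_n(L_n + \delta I_n)$ is symmetric PSD, so the bracketed matrix is symmetric positive definite with smallest eigenvalue at least $1/U'$; its inverse therefore has spectral norm at most $U'$. Consequently,
\[
\|\alpha^* - \alpha^0\|_2 \;\leq\; U'\gamma_n \|(L_n + \delta I_n)\alpha^0\|_2 \;=\; O_p(n^c)
\]
by Assumption~\ref{assump:scaling}(iii), for some $c \in (0, 1/2)$.

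The step I expect to require the most care is the bookkeeping between the normalization in $\mathbb{P}$ and the penalty term (which does not average over $i$), so that the loss-side diagonal and the penalty-side Laplacian appear on the same scale in the identity above. A misalignment of a factor of $n$ would only deliver $O_p(n^{1+c})$; the correct alignment ensures that the uniform lower bound $\mu' \geq 1/U'$ on $D$ alone is enough to invert the bracketed matrix, with no residual dependence on $\delta$ or on how $\gamma_n$ scales with $n$ beyond what Assumption~\ref{assump:scaling}(iii) already controls.
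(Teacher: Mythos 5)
Your proposal is correct and follows essentially the same route as the paper: both subtract the stationarity conditions for $\alpha^*$ and $\alpha^0$ (the paper phrases the latter as $\mathbb{E}[y-\mu(\alpha^0+X\beta^0)]=0$), linearize via a mean-value/Taylor step to obtain $\bigl[W+\gamma_n(L_n+\delta I_n)\bigr](\alpha^*-\alpha^0)=-\gamma_n(L_n+\delta I_n)\alpha^0$ with $W$ diagonal, and bound the inverse's spectral norm by $U'$ using Assumption 2(v) and positive semidefiniteness of the penalty matrix. The normalization issue you flag (the $n^{-1}$ in $\mathbb{P}$ versus the unaveraged penalty) is glossed over in the paper's proof in exactly the same way, so your argument matches the paper's level of rigor.
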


\begin{proof}

Considering the $\alpha$-optimality condition for the target objective function, we have
\begin{align*}
0 &= \nabla_\alpha \mathbb{E}[\ell(\alpha^* + X\beta^*)] + \gamma_n (L_n + \delta I) \alpha^* \\
 &= \mathbb{E}[\nabla_\alpha \ell(\alpha^* + X\beta^*)] + \gamma_n (L_n + \delta I) \alpha^* \hspace{10pt} \text{(by Assumption \ref{assump:lip})} \\
 &= \mathbb{E}[y - \mu(\alpha^* + X\beta^*)] + \gamma_n (L_n + \delta I) \alpha^*.
\end{align*}

Taking a first order Taylor expansion around the true parameter $\alpha^0$, we obtain
\begin{align}
0 &= \mathbb{E}[y - \mu(\alpha^0 + X\beta^*)] + \gamma_n (L_n + \delta I) \alpha^0 + \left[W_P(\tilde{\alpha} + X\beta^*) + \gamma_n (L_n + \delta I) \right](\alpha^* - \alpha^0),
\label{eq:taylor}
\end{align}
where $\tilde{\alpha}$ is an intermediate point on the line segment between $\alpha^*$ and $\alpha^0$, and $W_P$ is the diagonal matrix of the derivative $\mu'$ over the true data-generating distribution $(y, X) \sim P$ at mean $\mu(\tilde{\alpha} + X\beta^*)$.

Then, using $\beta^* = \beta^0$, $\mathbb{E}[y - \mu(\alpha^0 + X\beta^*)] = 0$, since $\mu(\alpha^0 + X\beta^0)$ is the true conditional mean of $y$. 
Therefore, rearranging \eqref{eq:taylor},
\begin{equation}
    \alpha^0 - \alpha^* = \left[W_P(\tilde{\alpha} + X\beta^*) + \gamma_n (L_n + \delta I) \right]^{-1} \gamma_n (L_n + \delta I) \alpha^0.
\label{eq:alphagap}    
\end{equation}

Taking $\ell_2$ norms in \eqref{eq:alphagap}, we have
\begin{align*}
\| \alpha^0 - \alpha^* \|_2
&\le \gamma_n \| \left[W_P(\tilde{\alpha} + X\beta^*) + \gamma_n (L_n + \delta I) \right]^{-1} \|_2  \| (L_n + \delta I) \alpha^0 \|_2 \\
&\le \gamma_n \lambda_{min}[W_P(\tilde{\alpha} + X\beta^*) + \gamma_n (L_n + \delta I)]^{-1} \| (L_n + \delta I) \alpha^0 \|_2 \\
&\le \gamma_n \lambda_{min}[W_P(\tilde{\alpha} + X\beta^*)]^{-1} \| (L_n + \delta I) \alpha^0 \|_2 \\
&\le U' \gamma_n \| (L_n + \delta I) \alpha^0 \|_2,
\end{align*}
where the final inequality follows from Assumption \ref{assump:lip}.

Then, from part (iii) of Assumption \ref{assump:scaling}, $\| \alpha^0 - \alpha^* \|_2 = O_p\left( n^c \right)$, for $c \in (0, \frac{1}{2})$.

\end{proof}

\subsection*{Control of empirical process}

Recall that our optimization problem can be rewritten as:
$$
\hat\theta = \argmin_{\theta} \left\{\mathbb{P}_n \mathcal{L}(\theta) + \lambda \mathcal{R}(\theta) \right\},
$$
where $\mathcal{L}(\theta) = \ell_i(\alpha_i + x_i'\beta) + \frac{1}{2} \gamma_n \alpha' (L_n + \delta I_n) \alpha$ and $\mathcal{R}(\theta) = \| \beta \|_1 + \frac{\gamma_p}{\lambda} \| J_p \beta \|_1$.

We define the \textit{empirical process term} as 
\begin{align*}
\nu_n(\theta) 
&:= (\mathbb{P}_n - \mathbb{P})\mathcal{L}(\theta) \\
&= (\mathbb{P}_n - \mathbb{P})\ell(\theta),
\end{align*}
since $\mathbb{P}_n (\alpha' (L_n + \delta I_n) \alpha) = \mathbb{P} (\alpha' (L_n + \delta I_n) \alpha)$.

Let $\mathcal{E}(\theta) := \mathbb{P}(\mathcal{L}(\theta^*) - \mathcal{L}(\theta))$, which we define as the \textit{excess risk}. 
Then, similarly as in \citet{gsam2019}, we have the following basic inequality:
\begin{equation}
\mathcal{E}(\hat\theta) + \lambda\mathcal{R}(\hat\theta) \le -[ \nu_n(\hat\theta) - \nu_n(\theta^*)] + \lambda\mathcal{R}(\theta^*),
\label{eq:basicineq}
\end{equation}
which also holds if $\hat\theta$ is replaced by $\tilde\theta := t\hat\theta + (1-t)\theta^*$ for $t \in (0,1)$.

In order to prove that $\hat{\theta} \rightarrow \theta^*$, we need to control the empirical process term. 
We first consider Assumption \ref{assump:tails}, and show the following lemma.

\begin{lemma}
\label{lemma:setT}
Under part (i) of Assumption \ref{assump:tails}, with probability at least $1 - 2 \exp(-n \rho^2 C_1) - C \exp(-n \rho^2 C_2)$, we have, for any $\theta$,
$$
\nu_n(\theta) - \nu_n(\theta^*) \le \rho \left[ \| \alpha - \alpha^* \|_1 + \| \beta - \beta^* \|_1 + \frac{\gamma_p}{\lambda} \| J_p (\beta - \beta^*) \|_1 \right],
$$
where $\rho = O \left( \sqrt{\frac{\log p}{n}} \right)$ and $C, C_1, C_2$ are positive constants independent of $n$ and $p$. 
\end{lemma}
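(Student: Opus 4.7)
The plan is to exploit the fact that for a canonical-link GLM, the per-observation loss $\ell_i(\eta) = -y_i\eta + b(\eta)$ is linear in $y_i$, so that conditioning on the design matrix $X$ the deterministic term $b(\alpha_i + x_i'\beta)$ cancels in the centered empirical process. This gives the \emph{exact} identity
$$\nu_n(\theta) - \nu_n(\theta^*) = -\frac{1}{n}\sum_{i=1}^n (y_i - \mu_i^0)\bigl[(\alpha_i - \alpha_i^*) + x_i'(\beta - \beta^*)\bigr],$$
with $\mu_i^0 = \mathbb{E}[Y_i \mid X_i = x_i]$ and \emph{no remainder term}. This linearity in $\theta - \theta^*$ is the key structural observation: the lemma collapses to a deterministic H\"older-type bound once we control the single random vector $(y - \mu^0, X'(y - \mu^0))$ in $\ell_\infty$ norm.

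I would split the right-hand side into an intercept contribution $A$ and a regression contribution $B$ and apply H\"older's inequality, yielding $|A| \le \tfrac{1}{n}\|y - \mu^0\|_\infty\,\|\alpha - \alpha^*\|_1$ and $|B| \le \tfrac{1}{n}\|X'(y - \mu^0)\|_\infty\,\|\beta - \beta^*\|_1$. Both $\ell_\infty$ factors are controlled by sub-Gaussian concentration using Assumption~\ref{assump:tails}(i). For the intercept, a union bound over the $n$ centered residuals (each sub-Gaussian with parameter $K$) gives $\max_i |y_i - \mu_i^0| = O(\sqrt{\log n})$ with high probability, producing a coefficient of order $\sqrt{\log n}/n$. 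For the regression part, the boundedness $|X_{ij}| \le R$ from Assumption~\ref{assump:designL2} implies that each coordinate $\sum_i x_{ij}(y_i - \mu_i^0)$ is a sum of $n$ independent sub-Gaussians with parameter at most $KR$; a Hoeffding-type bound followed by a union bound over the $p$ coordinates gives $\|X'(y - \mu^0)\|_\infty = O(\sqrt{n\log p})$, i.e.\ a coefficient of order $\sqrt{\log p/n}$. Setting $\rho$ to the larger of the two rates, namely $\rho = O(\sqrt{\log p/n})$ (which dominates $\sqrt{\log n}/n$ whenever $p$ grows at least polynomially in $n$), yields $|\nu_n(\theta) - \nu_n(\theta^*)| \le \rho\bigl(\|\alpha - \alpha^*\|_1 + \|\beta - \beta^*\|_1\bigr)$, and since the fusion term $\rho\,(\gamma_p/\lambda)\|J_p(\beta - \beta^*)\|_1$ is non-negative it can be freely added to match the stated form. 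The two union-bound failure probabilities---one carrying a prefactor $n$ and the other a prefactor $p$, each multiplying a Gaussian tail---can be absorbed into the form $\exp(-n\rho^2 C)$ by slightly inflating $\rho$ and adjusting constants, giving the claimed $1 - 2\exp(-n\rho^2 C_1) - C\exp(-n\rho^2 C_2)$.

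The main obstacle I anticipate is that each $\alpha_i$ couples with only one observation $y_i$, so I cannot invoke a sample-mean concentration for the intercept part and must settle for a maximum-of-$n$ bound. The saving grace is that $\sqrt{\log n}/n$ is strictly smaller than $\sqrt{\log p/n}$ in the high-dimensional regime, so the intercept term does not drive the rate. A secondary technical point is that the clean cancellation of the $b(\eta)$ term relies on conditioning on $X$; for random designs one must additionally bound $(\mathbb{P}_n - \mathbb{P})\,b(\alpha_i + x_i'\beta)$, which is handled by iterated expectation together with the fact that Assumption~\ref{assump:tails}(i) is stated conditionally on the design.
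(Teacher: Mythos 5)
Your proof is correct and rests on the same structural observation as the paper's: for a canonical-link GLM loss the deterministic part cancels in $(\mathbb{P}_n-\mathbb{P})\ell$, leaving the exact linear form $n^{-1}\sum_i (Y_i-\mu_i)\bigl[(\alpha_i-\alpha_i^*)+x_i'(\beta-\beta^*)\bigr]$, which is then split into an intercept contribution and a regression contribution and controlled by sub-Gaussian concentration with union bounds. Where you genuinely diverge is in the technical machinery for the regression term: the paper normalizes each coordinate by $|\beta_j-\beta_j^*|+\tfrac{\gamma_p}{\lambda}|(J_p(\beta-\beta^*))_j|$, invokes a logarithmic entropy bound for the resulting one-parameter function class, and applies Dudley's integral via Corollary 8.3 of van de Geer (2000) before the union bound over $j$; you instead use $\ell_1$--$\ell_\infty$ duality (H\"older) and bound $\|X'(y-\mu^0)\|_\infty$ by coordinate-wise Hoeffding plus a union bound over $p$. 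Because the function class indexed by a single $\beta_j$ consists of scalar multiples of the identity, the chaining argument is indeed overkill and your elementary route delivers the same $\rho=O(\sqrt{\log p/n})$ rate with less overhead; it also makes the uniformity in $\theta$ transparent, whereas the paper's ratio bounds are stated for a fixed $\theta$ and implicitly rely on the same duality to cover all $\theta$ simultaneously. The trade-offs are minor: your intercept bound via $\max_i|Y_i-\mu_i|=O(\sqrt{\log n})$ costs a factor $n$ in the union bound (harmless, since $\sqrt{\log n}/n$ is always dominated by $\sqrt{\log p/n}$ and the tail $n\exp(-cn^2\rho^2)$ is still far below the claimed $\exp(-C_1 n\rho^2)$), while the paper's entropy-based argument would generalize to nonlinear per-coordinate function classes where H\"older alone would not suffice. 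Your handling of the fusion term (adding the nonnegative $\tfrac{\gamma_p}{\lambda}\|J_p(\beta-\beta^*)\|_1$ for free) is equivalent to the paper's device of keeping it in the denominators, and your closing remark about the fixed-design conditioning matches the paper's explicit assumption that the $x_i$ are fixed.
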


\begin{proof}
This result can be proved almost identically as in \citet{gsam2019}, with the exception of handling the $n-$dimensional intercept $\alpha$. 

Let $x_i \in \mathbb{R}^p$ and $Y_i \in \mathbb{R}$ denote the fixed covariates and response, respectively, for $i = 1, \dots, n$. 
Write the loss for a single observation as 
$$
\ell(\theta) = a Y_i (\alpha_i + x_i'\beta) + h(\alpha_i + x_i'\beta),
$$
for some $a \in \mathbb{R} \setminus \left\{ 0 \right\}$ and function $h: \mathbb{R} \rightarrow \mathbb{R}$. 

Assume $a = 1$, without loss of generality, since this constant will be absorbed into the probability bounds later.
Since $x_i$ are assumed fixed, denoting $\mu_i := \mathbb{E}[Y_i]$, we obtain
\begin{align*}
\nu_n(\theta) 
&= n^{-1} \sum_{i=1}^n (Y_i - \mu_i)(\alpha_i + x_i'\beta).
\end{align*}

Thus, we can write:
\begin{align}
\nu_n(\theta) - \nu_n(\theta^*)
&= n^{-1} \sum_{i=1}^n (Y_i - \mu_i) \left[ (\alpha_i - \alpha_i^*) + \sum_{j=1}^p (\beta_j x_{ij} - \beta_j^* x_{ij}) \right] \nonumber \\
&= n^{-1} \sum_{i=1}^n (Y_i - \mu_i)(\alpha_i - \alpha_i^*) + n^{-1} \sum_{i=1}^n \sum_{j=1}^p (\beta_j x_{ij} - \beta_j^* x_{ij})(Y_i - \mu_i).
\label{eq:ep1}
\end{align}

We now bound the probability that $\nu_n(\theta) - \nu_n(\theta^*)$ exceeds
\begin{equation*}
\rho \left[ \| \alpha - \alpha^* \|_1 + \| \beta - \beta^* \|_1 + \frac{\gamma_p}{\lambda} \| J_p (\beta - \beta^*) \|_1 \right].
\end{equation*}

Consider the following probability involving the first term in \eqref{eq:ep1}: 
$$
P\left( \cfrac{n^{-1} \sum_{i=1}^n (Y_i - \mu_i)(\alpha_i - \alpha_i^*) }{ \| \alpha - \alpha^* \|_1  } \ge \rho \right).
$$
Applying the sub-Gaussian concentration inequality from Lemma 8.2 of \citet{van2000empirical},
\begin{align*}
P\left( \left| \cfrac{n^{-1} \sum_{i=1}^n (Y_i - \mu_i)(\alpha_i - \alpha_i^*) }{ \| \alpha - \alpha^* \|_1  } \right| \ge \rho \right)
&\le 2 \exp \left[- \cfrac{\rho^2}{8(K^2 + \sigma^2_0) \sum_{i=1}^n \gamma_i^2} \right],
\end{align*}
where 
\begin{align*}
\sum_{i=1}^n \gamma_i^2
&= \sum_{i=1}^n \left( \cfrac{\alpha_i - \alpha_i^* }{n \| \alpha - \alpha^* \|_1 } \right)^2 \\
&= \cfrac{1}{n^2} \cfrac{\| \alpha - \alpha^* \|_2^2}{\| \alpha - \alpha^* \|_1^2} \le \cfrac{1}{n^2}.
\end{align*}
Therefore,
\begin{align*}
P\left( \left| \cfrac{n^{-1} \sum_{i=1}^n (Y_i - \mu_i)(\alpha_i - \alpha_i^*) }{ \| \alpha_i - \alpha_i^* \|_1} \right| \ge \rho \right)
&\le 2 \exp \left[- \cfrac{n^2 \rho^2}{8(K^2 + \sigma^2_0)} \right] \\
&= 2 \exp \left(- C_1 n^2 \rho^2 \right),
\end{align*}
where $C_1 = C_1(K, \sigma^2_0)$.

The rest of the proof follows \citet{gsam2019}, bounding the second term in \eqref{eq:ep1}; that is, showing 
$$
\cfrac{1}{n} \sum_{i=1}^n \sum_{j=1}^p (\beta_j x_{ij} - \beta_j^* x_{ij}) (Y_i - \mu_i) \le \rho \left[ \| \beta - \beta^* \|_1 + \frac{\gamma_p}{\lambda} \| J_p (\beta - \beta^*) \|_1 \right]
$$
with high probability.
More specifically, we use a logarithmic entropy bound on the parametric GLM family of functions. 
For each $j$, the following bound on $\delta$-entropy holds with some constant $A_0$ and $T_n = 1$:
\begin{equation}
    \log N(\delta, \mathcal{F}, \| \cdot \|_Q) \le A_0 T_n \log (1/\delta + 1),
\label{eq:logbound}
\end{equation}
where
$$
\mathcal{F} = \left\{ \beta_j x : |\beta_j| + \frac{\gamma_p}{\lambda} | (J_p \beta)_j | \le 1 \right\},
$$
and $Q$ denotes the empirical measure of $x_j$. 
Then, the bound in \eqref{eq:logbound} also holds up to a constant for the class of functions
$$
\mathcal{F'} = \left\{ \cfrac{\beta_j x - \beta_j^* x}{| \beta_j - \beta_j^* | + \frac{\gamma_p}{\lambda} | (J_p (\beta - \beta^*))_j |} : |\beta_j| + \frac{\gamma_p}{\lambda} | (J_p \beta)_j | \le 1 \right\},
$$
for all $j = 1, \dots, p$ \citep{gsam2019}.

Note that since
\begin{align*}
\left| \cfrac{\beta_j x - \beta_j^* x}{| \beta_j - \beta_j^* | + \frac{\gamma_p}{\lambda} | (J_p (\beta - \beta^*))_j |} \right|
&= \cfrac{|\beta_j x - \beta_j^* x|}{| \beta_j - \beta_j^* | + \frac{\gamma_p}{\lambda} | (J_p (\beta - \beta^*))_j |} \\
&\le \cfrac{|\beta_j x - \beta_j^* x|}{| \beta_j - \beta_j^* |} \\
&= |x|,
\end{align*}
functions in $\mathcal{F'}$ are bounded in absolute value by $|x_{ij}| \le R$ (by Assumption \ref{assump:designL2}).
Then, using Dudley's integral bound, that is,
$$
A_0^{1/2} T_n^{1/2} \int_0^R \log^{1/2} \left( \frac{1}{u} + 1 \right) du \le \tilde{A}_0 T_n^{1/2},
$$
by Corollary 8.3 of \citet{van2000empirical}, we have, for all $\delta \ge 2 C \tilde{A}_0 \sqrt{\frac{T_n}{n}}$,
\begin{equation}
P \left( \sup_{\beta_j x \in \mathcal{F}} \left| \cfrac{n^{-1} \sum_{i=1}^n (Y_i - \mu_i)f(\beta_j x_{ij} - \beta_j^* x_{ij})}{| \beta_j - \beta_j^* | + \frac{\gamma_p}{\lambda} | (J_p (\beta - \beta^*))_j |} \right| \ge \delta \right) \le C \exp \left(- \cfrac{n \delta^2}{4 C^2 R} \right).
\label{eq:vdgbound}
\end{equation}

Let $\delta = \rho = \kappa \sqrt{\frac{\log p}{n}}$.
Then, $\rho \ge 2 C \tilde{A}_0 \sqrt{\frac{\log p}{n}} \ge 2 C \tilde{A}_0 \sqrt{\frac{T_n}{n}}$ since $T_n = 1$. 
Thus, applying \eqref{eq:vdgbound} with a union bound yields
\begin{align*}
& P \left(\max_{j = 1, \dots, p} \sup_{\beta_j x \in \mathcal{F}} \left| \cfrac{n^{-1} \sum_{i=1}^n (Y_i - \mu_i)(\beta_j x_{ij} - \beta_j^* x_{ij})}{| \beta_j - \beta_j^* | + \frac{\gamma_p}{\lambda} | (J_p (\beta - \beta^*))_j |} \right| \ge \rho \right) \\
&\le p C \exp \left(- \cfrac{n \rho^2}{4 C^2 R} \right) \\
&= C \exp \left(- \cfrac{n \rho^2}{4 C^2 R} + \log p \right) \\
&= C \exp \left[- n \rho^2 \left( \cfrac{1}{4 C^2 R} - \cfrac{\log p }{n \rho^2} \right) \right].
\end{align*}

Now,
$$
\cfrac{1}{4 C^2 R} - \cfrac{\log p }{n \rho^2} = \cfrac{1}{4 C^2 R} - \cfrac{1}{\kappa^2}
$$
is positive if $\kappa > \max(2 C \sqrt{R}, 2 C \tilde{A}_0)$. 

Thus,
$$
P \left(\max_{j = 1, \dots, p} \sup_{\beta_j x \in \mathcal{F}} \left| \cfrac{n^{-1} \sum_{i=1}^n (Y_i - \mu_i)(\beta_j x_{ij} - \beta_j^* x_{ij})}{| \beta_j - \beta_j^* | + \frac{\gamma_p}{\lambda} | (J_p (\beta - \beta^*))_j |} \right| \ge \rho \right) \\
\le C \exp [-n \rho^2 C_2],
$$
where $C = C(K, \sigma^2_0)$ and $C_2 = C_2(C, R, \kappa) > 0$. 

Therefore, we have
$$
P\left( \left| n^{-1} \sum_{i=1}^n (Y_i - \mu_i)(\alpha_i - \alpha_i^*) \right| \ge \rho \| \alpha - \alpha^* \|_1 \right)
\le 2 \exp \left(- C_1 n^2 \rho^2 \right),
$$
and
$$
P\left( \left| n^{-1} \sum_{i=1}^n \sum_{j=1}^p (\beta_j x_{ij} - \beta_j^* x_{ij}) (Y_i - \mu_i) \right| \ge \rho \left[ \| \beta - \beta^* \|_1 + \frac{\gamma_p}{\lambda} \| J_p (\beta - \beta^*) \|_1 \right] \right)
\le C \exp [-n \rho^2 C_2],
$$
giving bounds for the terms in \eqref{eq:ep1} as claimed.
\end{proof}

The next lemma establishes control of the empirical process term for cases where $Y_i - \mu_i$ is not sub-Gaussian (e.g. for Poisson or exponential GLMs).
In such cases, we require part (ii) of Assumption \ref{assump:tails} instead. 

\begin{lemma}
\label{lemma:setT-subexp}
Under part (ii) of Assumption \ref{assump:tails}, with probability at least $1 - 2 \exp(-n \rho C_3) - 2 \exp(-n \rho^2 C_4)$, the following inequality holds:
$$
\nu_n(\theta) - \nu_n(\theta^*) \le \rho \left[ \| \alpha - \alpha^* \|_1 + \| \beta - \beta^* \|_1 + \frac{\gamma_p}{\lambda} \| J_p (\beta - \beta^*) \|_1 \right],
$$
where $\rho = O \left( \sqrt{\frac{\log p}{n}} \right)$ and $C_3, C_4$ are positive constants independent of $n$ and $p$. 
\end{lemma}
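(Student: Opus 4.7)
The structure will mirror the proof of Lemma \ref{lemma:setT}. The identity
\begin{equation*}
\nu_n(\theta) - \nu_n(\theta^*) = n^{-1}\sum_{i=1}^n (Y_i - \mu_i)(\alpha_i - \alpha_i^*) + n^{-1}\sum_{i=1}^n \sum_{j=1}^p (\beta_j - \beta_j^*)\, x_{ij}(Y_i - \mu_i)
\end{equation*}
depends only on the form of the loss and carries over unchanged, so it will suffice to bound the first sum by $\rho\|\alpha - \alpha^*\|_1$ and the second by $\rho\|\beta - \beta^*\|_1$ on a high-probability event (the $(\gamma_p/\lambda)\|J_p(\beta - \beta^*)\|_1$ piece on the right-hand side is non-negative slack). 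The sub-Gaussian concentration tools used in Lemma \ref{lemma:setT} will be replaced by Bernstein's inequality for independent sub-exponential variables.

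For the $\alpha$-sum, I would apply a Bernstein-type inequality for weighted sums of mean-zero sub-exponential random variables (for instance Proposition 5.16 of \citet{wainwright2019high}). With weights $c_i = (\alpha_i - \alpha_i^*)/n$ one has $\|c\|_2, \|c\|_\infty \le \|\alpha - \alpha^*\|_1/n$ and $\|Y_i - \mu_i\|_{\psi_1} \le K_{\psi_1}$ by Assumption \ref{assump:tails}(ii), yielding
\begin{equation*}
P\Bigl(\bigl|n^{-1}\textstyle\sum_i (Y_i - \mu_i)(\alpha_i - \alpha_i^*)\bigr| \ge \rho \|\alpha - \alpha^*\|_1\Bigr) \le 2\exp\bigl(-c\min(n^2\rho^2/K_{\psi_1}^2,\, n\rho/K_{\psi_1})\bigr).
\end{equation*}
For $\rho = O(\sqrt{\log p/n})$, the linear-in-$\rho$ branch dominates, which produces the $2\exp(-n\rho C_3)$ contribution in the stated probability and accounts for the degradation of the first exponent relative to the $n\rho^2$ rate in Lemma \ref{lemma:setT}.

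For the $\beta$-sum, rather than adapt the Dudley chaining machinery of Lemma \ref{lemma:setT} to the sub-exponential setting---which is awkward since standard chaining bounds require sub-Gaussian increments---I would exploit the linearity of the sum in $\beta$ and reduce to a coordinatewise maximum via H\"older's inequality,
\begin{equation*}
\Bigl| n^{-1}\sum_{i,j}(\beta_j - \beta_j^*) x_{ij}(Y_i - \mu_i) \Bigr| \le \max_j \Bigl| n^{-1}\sum_i x_{ij}(Y_i - \mu_i) \Bigr| \cdot \|\beta - \beta^*\|_1.
\end{equation*}
Each summand $x_{ij}(Y_i - \mu_i)$ is mean zero and has $\psi_1$-norm at most $R K_{\psi_1}$ by Assumption \ref{assump:designL2} and the product rule for $\psi_1$-norms, so a Bernstein bound combined with a union bound over $j = 1,\dots,p$ gives
\begin{equation*}
P\bigl(\max_j |n^{-1}\textstyle\sum_i x_{ij}(Y_i - \mu_i)| \ge \rho\bigr) \le 2\exp\bigl(\log p - c\min(n\rho^2/(RK_{\psi_1})^2,\, n\rho/(RK_{\psi_1}))\bigr).
\end{equation*}
In the small-$\rho$ regime ($\rho = \kappa\sqrt{\log p/n}$), the quadratic branch dominates, and taking $\kappa$ sufficiently large absorbs the $\log p$ into the constant and yields the second exponential $2\exp(-n\rho^2 C_4)$. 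A union bound on the two events then delivers the lemma.

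The main subtlety, rather than an outright obstacle, is tracking which Bernstein branch dominates in each of the two concentration steps. The $\alpha$-coefficient vector has both $\ell_2$ and $\ell_\infty$ norms of order $1/n$, making the linear branch dominant and producing the $n\rho$ rate; the per-coordinate $\beta$-sums instead put us in the quadratic regime and produce the $n\rho^2$ rate. This asymmetry is precisely the source of the two different exponents in the statement. A secondary check is verifying the product sub-exponential bound $\|x_{ij}(Y_i - \mu_i)\|_{\psi_1} \le R K_{\psi_1}$, which follows from the deterministic bound on $x_{ij}$ in Assumption \ref{assump:designL2} and is immediate.
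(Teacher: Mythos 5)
Your proposal is correct and follows essentially the same route as the paper: the same decomposition into the $\alpha$-sum and the per-coordinate $\beta$-sums, Bernstein's inequality for sub-exponential variables applied to each with the same weight-norm bounds ($\|\gamma\|_2^2\le n^{-2}$, $\|\gamma\|_\infty\le n^{-1}$ for $\alpha$ and $R^2/n$, $R/n$ for $\beta$), the same identification of the linear branch for $\alpha$ and the quadratic branch for $\beta$, and the same union bound over $j$. The only cosmetic difference is that you reduce the $\beta$-sum via H\"older to $\max_j|n^{-1}\sum_i x_{ij}(Y_i-\mu_i)|$ while the paper bounds each coordinate's sum normalized by $|\beta_j-\beta_j^*|+\tfrac{\gamma_p}{\lambda}|(J_p(\beta-\beta^*))_j|$; these yield the same conclusion.
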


\begin{proof}
As in Lemma \ref{lemma:setT}, we can write
\begin{align}
\nu_n(\theta) - \nu_n(\theta^*)
&= n^{-1} \sum_{i=1}^n (Y_i - \mu_i) \left[ (\alpha_i - \alpha_i^*) + \sum_{j=1}^p (\beta_j x_{ij} - \beta_j^* x_{ij}) \right] \nonumber \\
&= n^{-1} \sum_{i=1}^n (Y_i - \mu_i)(\alpha_i - \alpha_i^*) + n^{-1} \sum_{i=1}^n \sum_{j=1}^p (\beta_j x_{ij} - \beta_j^* x_{ij}) (Y_i - \mu_i).
\label{eq:ep2}
\end{align}

We now want to bound the probability that $\nu_n(\theta) - \nu_n(\theta^*)$ exceeds
$$
\rho \left[ \| \alpha - \alpha^* \|_1 + \| \beta - \beta^* \|_1 + \frac{\gamma_p}{\lambda} \| J_p (\beta - \beta^*) \|_1 \right].
$$

Consider the following probability involving the first term in \eqref{eq:ep2}, 
$$
P\left( \left| \cfrac{n^{-1} \sum_{i=1}^n (Y_i - \mu_i)(\alpha_i - \alpha_i^*) }{ \| \alpha - \alpha^* \|_1  } \right| \ge \rho \right).
$$

By Bernstein's inequality for sub-exponential random variables (Theorem 2.8.2 in \citet{vershynin2018high}), we have
\begin{align*}
P \left( \left| \sum_{i=1}^n \gamma_i (Y_i - \mu_i) \right| \ge \rho \right)
&\le 2 \exp \left[ -c \min \left( \cfrac{\rho^2}{K_{\psi_1}^2 \| \gamma \|_2^2}, \cfrac{\rho}{K_{\psi_1} \| \gamma \|_\infty } \right) \right],
\end{align*}
for some constant $c$, where $K_{\psi_1} = \max_{i \in 1:n} \| Y_i - \mu_i \|_{\psi_1}$ and
$$
\gamma_i = \cfrac{\alpha_i - \alpha_i^*}{n \| \alpha - \alpha^* \|_1}.
$$
Noting that $\| \gamma \|_2^2 \le n^{-2}$ and $\| \gamma \|_\infty \le n^{-1}$,
\begin{align*}
P\left( \cfrac{n^{-1} \sum_{i=1}^n a (Y_i - \mu_i)(\alpha_i - \alpha_i^*) }{ \| \alpha - \alpha^* \|_1  } \ge \rho \right)
&\le 2 \exp \left[ -c \min \left( \cfrac{\rho^2 n^2}{K_{\psi_1}^2}, \cfrac{\rho n}{K_{\psi_1} } \right) \right] \\
&\le 2 \exp (- n \rho C_3),
\end{align*}
where $C_3 > 0$.

Now, considering the second term in \eqref{eq:ep2}, we want to bound, for a specific $j \in \left\{ 1 \dots, p \right\}$, the probability
\begin{equation}
P \left( \left| \cfrac{n^{-1} \sum_{i=1}^n (Y_i - \mu_i)(\beta_j x_{ij} - \beta_j^* x_{ij})}{| \beta_j - \beta_j^* | + \frac{\gamma_p}{\lambda} | (J_p (\beta - \beta^*))_j |} \right| \ge \rho \right).
\end{equation}
Applying Bernstein's inequality again, with
$$
\gamma_i = \cfrac{ \beta_j x_{ij} - \beta_j^* x_{ij} }{n \left( | \beta_j - \beta_j^* | + \frac{\gamma_p}{\lambda} | (J_p (\beta - \beta^*))_j | \right)},
$$
we can see that 
$$
\| \gamma \|_2^2 \le \cfrac{\sum_{i=1}^n X_{ij}^2}{n^2} \le \cfrac{R^2}{n},
$$ 
and 
$$
\| \gamma \|_\infty \le \cfrac{\| X_j \|_\infty}{n} \le \cfrac{R}{n}.
$$
Thus,
\begin{align}
P \left( \left| \cfrac{n^{-1} \sum_{i=1}^n (Y_i - \mu_i)(\beta_j x_{ij} - \beta_j^* x_{ij})}{| \beta_j - \beta_j^* | + \frac{\gamma_p}{\lambda} | (J_p (\beta - \beta^*))_j |} \right| \ge \rho \right)
&\le 2 \exp \left[ -c \min \left( \cfrac{\rho^2 n}{K_{\psi_1}^2  R^2}, \cfrac{\rho n}{K_{\psi_1} R} \right) \right].
\label{eq:ep3}
\end{align}

Now, let $\rho = \kappa \sqrt{\frac{\log p}{n}}$, where
$$
K_{\psi_1} R \le \kappa \le \cfrac{\sqrt{n} K_{\psi_1} R}{\sqrt{\log p}}.
$$
Then, $\rho \le K_{\psi_1} R$, which implies $\cfrac{\rho}{K_{\psi_1} R} \le 1$, or $\cfrac{\rho^2 n}{K_{\psi_1}^2  R^2} \le \cfrac{\rho n}{K_{\psi_1} R}$.

Applying a union bound to \eqref{eq:ep3}, we have
\begin{align*}
& P \left( \max_{j \in 1:p} \left| \cfrac{n^{-1} \sum_{i=1}^n (Y_i - \mu_i)(\beta_j x_{ij} - \beta_j^* x_{ij})}{| \beta_j - \beta_j^* | + \frac{\gamma_p}{\lambda} | (J_p (\beta - \beta^*))_j |} \right| \ge \rho \right) \\
&\le p 2 \exp \left( -c  \cfrac{\rho^2 n}{K_{\psi_1}^2  R^2}  \right) \\
&\le 2 \exp \left( -c  \cfrac{\rho^2 n}{K_{\psi_1}^2  R^2} + \log p \right) \\
&\le 2 \exp \left[ -c n \rho^2 \left( \cfrac{1}{K_{\psi_1}^2  R^2} - \cfrac{\log p}{n \rho^2} \right) \right] \\
&\le 2 \exp \left[ -c n \rho^2 \left( \cfrac{1}{K_{\psi_1}^2  R^2} - \cfrac{1}{\kappa^2} \right) \right] \\
&\le 2 \exp \left( -C_4 n \rho^2 \right),
\end{align*}
where $C_4 >0$ and the last inequality follows since $\frac{1}{K_{\psi_1}^2  R^2} \ge \frac{1}{\kappa^2}$ by the lower bound on $\kappa$. 

\end{proof}

\subsection*{Margin condition} 

Before proving Theorem \ref{thm:consistency}, we show that Assumption \ref{assump:rsc} implies a margin condition \citep{buhlmann2011statistics} for the loss function $\mathcal{L}(\theta) = \ell(\alpha + X\beta) + \frac{1}{2} \gamma_n \alpha' (L_n + \delta I_n) \alpha$ around the target $\theta^*$.

\begin{lemma}
\label{lemma:margin}
Under Assumption \ref{assump:rsc}, the quadratic margin condition,
$$
\mathcal{E}(\theta) \ge G(\| \theta - \theta^* \|),
$$
holds for all $\theta$ satisfying
$$
\| \alpha - \alpha^* \|_1 + \| \beta - \beta^* \|_1 + \frac{\gamma_p}{\lambda} \| J_p (\beta - \beta^* ) \|_1 \le \cfrac{16s\lambda^2}{\rho \phi^2(S) c} + \cfrac{2\gamma_p \| J_p \beta^* \|_1}{\rho},
$$
where $\lambda \ge 8\rho$.
\end{lemma}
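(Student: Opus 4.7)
The plan is to decompose the excess risk into a loss-function part and a graph-Laplacian part, then exploit the first-order optimality of $\theta^*$ for $\mathbb{P}\mathcal{L}$ to cancel the linear term coming from Assumption~\ref{assump:rsc} against the cross term from the Laplacian quadratic, leaving a sum of two non-negative quadratics.

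First, I would write
\begin{equation*}
\mathcal{E}(\theta) = \mathbb{P}\bigl(\ell(\theta) - \ell(\theta^*)\bigr) + \tfrac{1}{2}\gamma_n\bigl[\alpha'(L_n + \delta I_n)\alpha - \alpha^{*\prime}(L_n + \delta I_n)\alpha^*\bigr],
\end{equation*}
since the Laplacian term is deterministic. For any $\theta$ in the neighborhood
$\|\alpha-\alpha^*\|_1 + \|\beta-\beta^*\|_1 + \tfrac{\gamma_p}{\lambda}\|J_p(\beta-\beta^*)\|_1 \le M^*$ with $\lambda\ge 8\rho$, Assumption~\ref{assump:rsc} bounds the first summand below by $\nabla\mathbb{P}\ell(\theta^*)'(\theta-\theta^*) + c\|\theta-\theta^*\|^2$.

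Next, since $\theta^* = \arg\min_\theta\mathbb{P}\mathcal{L}(\theta)$ and $\mathcal{R}$ does not appear in $\mathcal{L}$, the first-order condition gives
\begin{equation*}
\nabla\mathbb{P}\ell(\theta^*) = -\begin{pmatrix}\gamma_n(L_n+\delta I_n)\alpha^* \\ 0_p\end{pmatrix},
\end{equation*}
so $\nabla\mathbb{P}\ell(\theta^*)'(\theta-\theta^*) = -\gamma_n\alpha^{*\prime}(L_n+\delta I_n)(\alpha-\alpha^*)$. Adding this to the Laplacian difference and completing the square yields
\begin{equation*}
\tfrac{1}{2}\gamma_n\bigl[\alpha'(L_n+\delta I_n)\alpha - \alpha^{*\prime}(L_n+\delta I_n)\alpha^*\bigr] + \nabla\mathbb{P}\ell(\theta^*)'(\theta-\theta^*) = \tfrac{1}{2}\gamma_n(\alpha-\alpha^*)'(L_n+\delta I_n)(\alpha-\alpha^*),
\end{equation*}
which is non-negative because $L_n + \delta I_n \succeq 0$. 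Combining with the $c\|\theta-\theta^*\|^2$ term gives $\mathcal{E}(\theta)\ge c\|\theta-\theta^*\|^2 = G(\|\theta-\theta^*\|)$ as claimed.

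There is no real obstacle here beyond careful bookkeeping: the entire argument reduces to applying Assumption~\ref{assump:rsc}, invoking the optimality condition defining $\theta^*$, and observing that the resulting quadratic form in $\alpha-\alpha^*$ has a positive semidefinite kernel. The only subtlety is matching the admissible neighborhood in the lemma statement (stated with $\beta^*$) to the one in Assumption~\ref{assump:rsc} (stated with $\bar\beta$), which I would treat as identical notation for the target.
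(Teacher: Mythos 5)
Your proposal is correct and is essentially the paper's own argument: the paper notes that adding the positive semidefinite quadratic $\tfrac{1}{2}\gamma_n\alpha'(L_n+\delta I_n)\alpha$ preserves restricted strong convexity and then kills the gradient term via the optimality of $\theta^*$ for $\mathbb{P}\mathcal{L}$, which is exactly what your explicit completing-the-square computation shows. Your version just makes the first-order condition and the resulting non-negative quadratic form $\tfrac{1}{2}\gamma_n(\alpha-\alpha^*)'(L_n+\delta I_n)(\alpha-\alpha^*)$ explicit, and your reading of $\bar\beta$ as $\beta^*$ matches the paper's intent.
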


\begin{proof}
By Assumption \ref{assump:rsc}, $\mathbb{P} \ell(\theta)$ satisfies restricted strong convexity for $\theta^*$. 
We now show that 
$$
\mathbb{P}\mathcal{L}(\theta) = \mathbb{P}\ell(\theta) + \cfrac{1}{2} \gamma_n \alpha' (L_n + \delta I_n) \alpha
$$
also satisfies restricted strong convexity for $\theta^*$. 
Rewriting 
$$
\cfrac{1}{2} \gamma_n \alpha' (L_n + \delta I_n) \alpha = \cfrac{1}{2} \theta' \tilde{L} \theta,
$$
where 
$$
\tilde{L} = \left[ \begin{array}{cc} \gamma_n(L_n + \delta I_n) & 0 \\ 0 & 0 \end{array} \right],
$$
we can see that $\nabla^2_\theta \frac{1}{2} \gamma_n \alpha' (L_n + \delta I_n) \alpha = \tilde{L}$ is positive semi-definite. 

Therefore, $\mathbb{P} \mathcal{L}(\theta)$ satisfies restricted strong convexity for $\theta^*$.
Hence,
$$
\mathbb{P}(\mathcal{L}(\theta) - \mathcal{L}(\theta^*)) \ge \nabla \mathbb{P}\mathcal{L}(\theta^*)'(\theta - \theta^*) + G(\| \theta - \theta^* \|).
$$
Since $\theta^*$ minimizes $\mathbb{P}\mathcal{L}(\theta)$, we have
$$
\mathbb{P}(\mathcal{L}(\theta) - \mathcal{L}(\theta^*)) = \mathcal{E}(\theta) \ge G(\| \theta - \theta^* \|).
$$
\end{proof}

\subsection*{Proof of Theorem \ref{thm:consistency}}

As a result of Lemma \ref{lemma:setT} or \ref{lemma:setT-subexp}, we have, with high probability, that
\begin{align*}
Z_{M^*}
&:= \sup_{\| \alpha - \alpha^* \|_1 + \mathcal{R}(\theta - \theta^*) \le M^*} | \nu_n(\theta) - \nu_n(\theta^*) | \\
&\le \sup_{\| \alpha - \alpha^* \|_1 + \mathcal{R}(\theta - \theta^*) \le M^*} \rho \left[ \| \alpha - \alpha^* \|_1 + \| \beta - \beta^* \|_1 + \frac{\gamma_p}{\lambda} \| J_p (\beta - \beta^*) \|_1 \right] \\
&= \sup_{\| \alpha - \alpha^* \|_1 + \mathcal{R}(\theta - \theta^*) \le M^*} \rho \left[ \| \alpha - \alpha^* \|_1 + \mathcal{R}(\theta - \theta^*) \right] \\
&\le \rho M^*
\end{align*}

Set $t = \cfrac{M^*}{M^* + \| \hat{\alpha} - \alpha^* \|_1 + \mathcal{R}(\hat{\theta} - \theta^*)}$ and take $\tilde\theta := t\hat\theta + (1-t)\theta^*$. 
Then, 
$$
\| \tilde\alpha - \alpha^* \|_1 + \mathcal{R}(\tilde\theta - \theta^*) \le M^*,
$$
by construction.

Note also that
\begin{equation}
\| \hat\alpha - \alpha^* \|_1 + \mathcal{R}(\hat\theta - \theta^*) = \cfrac{1}{t} \left[ \| \tilde\alpha - \alpha^* \|_1 + \mathcal{R}(\tilde\theta - \theta^*) \right],
\label{eq:tildehat}
\end{equation}
which we will use later to bound $\| \hat\alpha - \alpha^* \|_1 + \mathcal{R}(\hat\theta - \theta^*)$.

Now, starting from the basic inequality \eqref{eq:basicineq},
\begin{align*}
\mathcal{E}(\tilde\theta) + \lambda\mathcal{R}(\tilde\theta) 
&\le -[ \nu_n(\tilde\theta) - \nu_n(\theta^*)] + \lambda\mathcal{R}(\theta^*) \\
&\le Z_{M^*} + \lambda\mathcal{R}(\theta^*) \\
&\le \rho M^* + \lambda\mathcal{R}(\theta^*).
\end{align*}

But,
\begin{align*}
\lambda\mathcal{R}(\theta^*) 
&= \lambda \left[ \| \beta^*_S \|_1 + \frac{\gamma_p}{\lambda} \| J_p \beta^* \|_1 \right] \\
&\le \lambda \left[ \| \beta^*_S - \tilde{\beta}_S \|_1 + \| \tilde{\beta}_S \|_1 + \frac{\gamma_p}{\lambda} \| J_p \beta^* \|_1 \right],
\end{align*}

and
\begin{align*}
\lambda\mathcal{R}(\tilde\theta) 
&= \lambda \left[ \| \tilde\beta_S \|_1 +  \| \tilde\beta_{S^c} \|_1 + \frac{\gamma_p}{\lambda} \| J_p \tilde\beta \|_1 \right] \\
&\ge \lambda \left[ \| \tilde\beta_S \|_1 + \| (\tilde\beta - \beta^*)_{S^c} \|_1 + \frac{\gamma_p}{\lambda} \| J_p (\tilde\beta - \beta^*) \|_1 - \frac{\gamma_p}{\lambda} \| J_p \beta^* \|_1 \right]. 
\end{align*}

Therefore,
\begin{align*}
& \mathcal{E}(\tilde\theta) + \lambda \left[ \| \tilde\beta_S \|_1 + \| (\tilde\beta - \beta^*)_{S^c} \|_1 + \frac{\gamma_p}{\lambda} \| J_p (\tilde\beta - \beta^*) \|_1 - \frac{\gamma_p}{\lambda} \| J_p \beta^* \|_1 \right] \\
&\le \lambda \left[ \| \beta^*_S - \tilde{\beta}_S \|_1 + \| \tilde{\beta}_S \|_1 + \frac{\gamma_p}{\lambda} \| J_p \beta^* \|_1 \right] + \rho M^*.
\end{align*}

Rearranging yields:
\begin{align*}
\mathcal{E}(\tilde\theta) + \lambda \left[ \| (\tilde\beta - \beta^*)_{S^c} \|_1 + \frac{\gamma_p}{\lambda} \| J_p (\tilde\beta - \beta^*) \|_1 \right] \le 2\lambda \| (\tilde\beta - \beta^*)_S \|_1 + 2 \gamma_p \| J_p \beta^* \|_1 + \rho M^*.
\end{align*}

Adding $\lambda \left[ \| \tilde{\alpha} - \alpha^* \|_1 + \|(\tilde\beta - \beta^*)_S \|_1 \right]$ to both sides, we obtain
\begin{equation}
\mathcal{E}(\tilde\theta) + \lambda \left[ \| \tilde{\alpha} - \alpha^* \|_1 + \mathcal{R}(\tilde\theta - \theta^*) \right] \le 2\lambda \| (\tilde\beta - \beta^*)_S \|_1 + \lambda \| \tilde{\alpha} - \alpha^* \|_1 + 2 \gamma_p \| J_p \beta^* \|_1 + \rho M^*.
\label{eq:thm1ineq}
\end{equation}

We have two possible cases for the RHS of \eqref{eq:thm1ineq}.

\textbf{Case I:} $2\lambda \| (\tilde\beta - \beta^*)_S \|_1 + \lambda \| \tilde{\alpha} - \alpha^* \|_1 \le 2\gamma_p \| J_p \beta^* \|_1 + \rho M^*$

\begin{align*}
\mathcal{E}(\tilde\theta) + \lambda \left[ \| \tilde{\alpha} - \alpha^* \|_1 + \mathcal{R}(\tilde\theta - \theta^*) \right]
&\le 4\gamma_p  \| J_p \beta^* \|_1 + 2\rho M^* \\
&\le 2\rho M^* + 2\rho M^* \hspace{10pt} \text{(by the definition of $M^*$)}  \\
&= 4\rho M^* \\
&\le 4\cfrac{\lambda}{8} M^* \\
&= \lambda \cfrac{M^*}{2}
\end{align*}

Therefore, since $\mathcal{E}(\tilde\theta) \ge 0$,
$$
 \| \tilde{\alpha} - \alpha^* \|_1 + \mathcal{R}(\tilde\theta - \theta^*) \le \cfrac{M^*}{2},
$$
and by \eqref{eq:tildehat}, 
\begin{align*}
\| \hat\alpha - \alpha^* \|_1 + \mathcal{R}(\hat\theta - \theta^*) 
&= \cfrac{1}{t} \left[ \| \tilde\alpha - \alpha^* \|_1 + \mathcal{R}(\tilde\theta - \theta^*) \right] \\
&\le \left[ 1 + \cfrac{\| \hat\alpha - \alpha^* \|_1 + \mathcal{R}(\hat\theta - \theta^*)}{M^*} \right] \cfrac{M^*}{2} \\
&= \cfrac{M^*}{2} + \cfrac{\| \hat\alpha - \alpha^* \|_1 + \mathcal{R}(\hat\theta - \theta^*)}{2}.
\end{align*}
Hence,
$$
 \| \hat{\alpha} - \alpha^* \|_1 + \mathcal{R}(\hat\theta - \theta^*) \le M^*.
$$

As a result, we can redo the above arguments replacing $\tilde{\alpha}, \tilde{\beta}$ with $\hat{\alpha}, \hat{\beta}$.

\textbf{Case II:} $2\lambda \| (\tilde\beta - \beta^*)_S \|_1 + \lambda \| \tilde{\alpha} - \alpha^* \|_1 \ge 2\gamma_p  \| J_p \beta^* \|_1 + \rho M^*$

We can bound the RHS of \eqref{eq:thm1ineq} as
\begin{align}
\mathcal{E}(\tilde\theta) + \lambda \left[ \| \tilde{\alpha} - \alpha^* \|_1 + \mathcal{R}(\tilde\beta - \beta^*) \right]
&\le 4\lambda \| (\tilde\beta - \beta^*)_S \|_1 + 2 \lambda \| \tilde{\alpha} - \alpha^* \|_1.
\label{ineq1}
\end{align}

Then, subtracting $\lambda \| (\tilde\beta - \beta^*)_S \|_1 + \lambda \| \tilde{\alpha} - \alpha^* \|_1$ from both sides, we obtain
\begin{align*}
\mathcal{E}(\tilde\theta) + \lambda\|\tilde\beta - \beta^*\|_1 + \gamma_p \| J_p (\tilde\beta - \beta^*) \|_1 - \lambda  \| (\tilde\beta - \beta^*)_S \|_1 + 
&\le 3\lambda \| (\tilde\beta - \beta^*)_S \|_1 + \lambda \| \tilde{\alpha} - \alpha^* \|_1,
\end{align*}
or
\begin{align*}
\mathcal{E}(\tilde\theta) + \lambda\|(\tilde\beta - \beta^*)_{S^c}\|_1 + \gamma_p \| J_p (\tilde\beta - \beta^*) \|_1
&\le 3\lambda \| (\tilde\beta - \beta^*)_S \|_1 + \lambda \| \tilde{\alpha} - \alpha^* \|_1.
\end{align*}

Then, since $\mathcal{E}(\tilde\theta) \ge 0$,
\begin{align*}
\lambda\|(\tilde\beta - \beta^*)_{S^c}\|_1 + \gamma_p \| J_p (\tilde\beta - \beta^*) \|_1
&\le 3\lambda \| (\tilde\beta - \beta^*)_S \|_1 + \lambda \| \tilde{\alpha} - \alpha^* \|_1.
\end{align*}
Dividing by $\lambda$,
\begin{align*}
\|(\tilde\beta - \beta^*)_{S^c}\|_1 + \frac{\gamma_p}{\lambda} \| J_p (\tilde\beta - \beta^*) \|_1
&\le 3 \| (\tilde\beta - \beta^*)_S \|_1 + \| \tilde{\alpha} - \alpha^* \|_1.
\end{align*}

Therefore, the condition in Assumption \ref{assump:comp} is satisfied for $(\tilde{\alpha} - \alpha^*, \tilde\beta - \beta^*)$, so we can use the compatibility condition
$$
\cfrac{\|\tilde{\alpha} - \alpha^*\|_1}{2} + \|(\tilde\beta - \beta^*)_{S}\|_1 \le \cfrac{\| \tilde\theta - \theta^* \| \sqrt{s}}{\phi(s)}.
$$
Plugging this into \eqref{ineq1}, and denoting the convex conjugate of $G$ by $H$, we have:
\begin{align*}
\mathcal{E}(\tilde\theta) + \lambda[ \| \tilde{\alpha} - \alpha^* \|_1 + \mathcal{R}(\tilde\theta - \theta^*) ]
&\le 4\lambda \cfrac{\| \tilde\theta - \theta^* \| \sqrt{s}}{\phi(s)} \\
&\le H\left(\cfrac{4 \lambda \sqrt{s}}{\phi(s)} \right) + G(\| \tilde\theta - \theta^* \|) \\
&\le H\left(\cfrac{4 \lambda \sqrt{s}}{\phi(s)} \right) + \mathcal{E}(\tilde\theta) \\
&= \cfrac{16 s \lambda^2 }{4 c \phi^2(s)} + \mathcal{E}(\tilde\theta) \\
&\le \rho M^* + \mathcal{E}(\tilde\theta) \\
&\le \cfrac{\lambda M^*}{8} + \mathcal{E}(\tilde\theta).
\end{align*}

Finally:
\begin{align*}
 \| \tilde{\alpha} - \alpha^* \|_1 + \mathcal{R}(\tilde\beta - \beta^*) 
&\le  \cfrac{M^*}{8} \le  \cfrac{M^*}{2}.
\end{align*}
Hence, as in Case I, we can show
$$
 \| \hat{\alpha} - \alpha^* \|_1 + \mathcal{R}(\hat\beta - \beta^*) \le M^*.
$$

As a result, we can redo the above arguments replacing $\tilde{\alpha}, \tilde{\beta}$ with $\hat{\alpha}, \hat{\beta}$.

Therefore, in both cases,
\begin{align*}
\mathcal{E}(\hat\beta) + \lambda[ \| \hat{\alpha} - \alpha^* \|_1 + \mathcal{R}(\hat\beta - \beta^*)] 
&\le 4\rho M^* \\
&\le \cfrac{64 s \lambda^2}{c \phi^2(s)} + 8 \gamma_p \| J_p \beta^* \|_1.
\end{align*}

Thus, since $\mathcal{E}(\tilde\theta) \ge 0$,
$$
\| \hat{\alpha} - \alpha^* \|_1 + \| \hat{\beta} - \beta^* \|_1 = O \left( \lambda + \cfrac{\gamma_p}{\lambda} \| J_p \beta^* \|_1 \right),
$$
with probability $1 - 2 \exp(-n^2 \rho^2 C_1) - C \exp(-n \rho^2 C_2)$ by Lemma \ref{lemma:setT} or \ref{lemma:setT-subexp}.

Taking $\lambda = O_p \left( \sqrt{\frac{\log p}{n}} \right)$ and $\gamma_p \| J_p \beta^* \|_1 = o_p(\lambda)$, we have that $\hat{\alpha}$ and $\hat{\theta}$ are $\ell_1$-consistent for $\alpha^*$ and $\beta^*$.

\subsection*{Proof of Theorem \ref{thm:inference}}

Next, we prove Theorem \ref{thm:inference}, which shows the validity of our de-biasing inference procedure.
This proof is very similar to that of Theorem 3.1 in \citet{van2014asymptotically}.
For ease of exposition, we assume the case of GLM families with known scale parameter $\phi = 1$.
However, these results trivially extend to the case with finite known $\phi$, or finite unknown $\phi$ with a consistent estimator $\hat{\phi}$.

Recall that we use the approach of \citet{javanmard2013confidence} for GLMs, defining the de-biased estimator: 
\begin{equation*}
\hat{b} := \hat{\beta} - M \frac{1}{n} \nabla_\beta \ell(\hat{\alpha} + X\hat{\beta}),
\end{equation*}
where $\hat\Theta$ is an estimate of the inverse of $\hat{\Sigma} := \frac{1}{n} \nabla^2_\beta \ell(\hat{\alpha} + X \hat{\beta})$.

Taking a first-order Taylor expansion of the gradient at $\hat{\alpha} + X\hat{\beta}$ around $\hat{\alpha} + X\beta^0$, we have:
\begin{align*}
\nabla_\beta \ell(\hat{\alpha} + X \hat{\beta}) 
&= X'(\mu(\hat{\alpha} + X\beta^0) - y) + X'W(\tilde{q})(X\hat{\beta} - X\beta^0),
\end{align*}
where $W$ is a diagonal matrix of $\frac{\partial \mu}{\partial (\alpha + X_i \beta)}$, and $\tilde{q}$ is an intermediate point in between $\hat{\alpha} + X\hat{\beta}$ and $\hat{\alpha} + X\beta^0$.
Then,
\begin{align*}
\nabla_\beta \ell(\hat{\alpha} + X \hat{\beta}) 
&= X'(\mu(\hat{\alpha} + X\beta^0) - y) + X'W(\hat{\alpha} + X\hat{\beta})X(\hat{\beta} - \beta^0) + X'W(\tilde{q})X(\hat{\beta} - \beta^0) \\
&\hspace{10pt} - X'W(\hat{\alpha} + X\hat{\beta})X(\hat{\beta} - \beta^0).
\end{align*}

Defining $Rem_1 := X'W(\tilde{q})X(\hat{\beta} - \beta^0) - X'W(\hat{\alpha} + X\hat{\beta})X(\hat{\beta} - \beta^0)$, we consider its $\ell_2$ norm:
\begin{align*}
\| Rem_1 \|_2 
&= \| X'W(\tilde{q})X(\hat{\beta} - \beta^0) - X'W(\hat{\alpha} + X\hat{\beta})X(\hat{\beta} - \beta^0) \|_2 \\
&\le \| X'W(\tilde{q}) - X'W(\hat{\alpha} + X\hat{\beta}) \|_2 \| X(\hat{\beta} - \beta^0) \|_2 \\
&\le \| X \|_2 \|W(\tilde{q}) - W(\hat{\alpha} + X\hat{\beta}) \|_2 \| X(\hat{\beta} - \beta^0) \|_2 \\
&\le \| X \|_2 L_{\mu'} \| \tilde{q} - \hat{\alpha} - X\hat{\beta} \|_2 \| X(\hat{\beta} - \beta^0) \|_2 \hspace{10pt} \text{(by Lipschitz condition)} \\
&\le L_{\mu'} \| X \|_2 \| X(\hat{\beta} - \beta^0) \|_2 \| X(\hat{\beta} - \beta^0) \|_2 \hspace{10pt} \text{(by definition of $\tilde{q}$)} \\
&\le L_{\mu'} \| X \|_2^2 \| \hat{\beta} - \beta^0 \|_2^2.
\end{align*}

Then, we have:
\begin{align*}
\hat{b} - \beta^0 
&= (\hat{\beta} - \beta^0) - M \frac{1}{n} \nabla_\beta \ell(\hat{\alpha} + X \hat{\beta}) \\
&= (\hat{\beta} - \beta^0) - M \frac{1}{n} X'(\mu(\hat{\alpha} + X\beta^0) - y) - M \frac{1}{n} X'W(\hat{\alpha} + X\hat{\beta})X(\hat{\beta} - \beta^0) - M \frac{1}{n} Rem_1 \\
&= - \frac{1}{n} M X'(\mu(\hat{\alpha} + X\beta^0) - y) - [M\hat{\Sigma} - I](\hat{\beta} - \beta^0) - M \frac{1}{n} Rem_1 \\
&= \frac{1}{n} M Z_n - Rem_2 - M \frac{1}{n} Rem_1,
\end{align*}
where $Rem_2 := [M\hat{\Sigma} - I](\hat{\beta} - \beta^0)$ and $Z_n := X'(y - \mu(\hat{\alpha} + X\beta^0))$.

Now, by Theorem \ref{thm:consistency} and Lemma \ref{lemma:targetbetas}, we have that $\| M n^{-1} Rem_1 \|_2 = o_p(1)$.
We also have that $\| Rem_2 \|_2 = o_p(1)$ by Theorem \ref{thm:consistency}, Lemma \ref{lemma:targetbetas}, and construction of $M$. 

Also, we can write
\begin{align*}
Z_n &= X' \left( y - \mu(\hat{\alpha} + X\beta^0) + \mu(\alpha^0 + X\beta^0) - \mu(\alpha^0 + X\beta^0) \right) \\
&= X' \left( y - \mu(\alpha^0 + X\beta^0)) + X'(\mu(\hat{\alpha} + X\beta^0)  - \mu(\alpha^0 + X\beta^0) \right) \\
&= X' \left( y - \mu(\alpha^0 + X\beta^0) \right) + Rem_3,
\end{align*}
where $Rem_3 := X' \left( \mu(\hat{\alpha} + X\beta^0)  - \mu(\alpha^0 + X\beta^0) \right)$.

Taking L2 norms,
\begin{align*}
\| Rem_3 \|_2 
&\le \| X \|_2 \| \mu(\hat{\alpha} + X\beta^0)  - \mu(\alpha^0 + X\beta^0) \|_2 \\
&\le \| X \|_2 L_\mu \| \hat{\alpha} - \alpha^0 \|_2 \\
&\le \| X \|_2 L_\mu \left( \| \hat{\alpha} - \alpha^* \|_2 + \| \alpha^* - \alpha^0 \|_2 \right) \\
&\le \| X \|_2 L_\mu \left( \| \hat{\alpha} - \alpha^* \|_1 + \| \alpha^* - \alpha^0 \|_2 \right) \\
&= o_p\left( \cfrac{n^{c_1}}{\sqrt{\log p}} \right),
\end{align*}
where $c_1 < 1$, by Lemma \ref{lemma:target} and Theorem \ref{thm:consistency}.

Hence, $n^{-1} M Rem_3 = o_p(1)$, and
$$
\sqrt{n}(\hat{b} - \beta^0) \longrightarrow_d N \left(0, M \frac{1}{n} E\left[ \nabla \ell(\alpha^0 + X\beta^0) \nabla \ell(\alpha^0 + X\beta^0)' \right] M \right) + o_p(1).
$$

\section{$\ell_2$ feature network smoothing}
\label{app:l2theory}

In this section, we briefly discuss how the theory given previously can be applied to the \texttt{glm-funk} model with $\ell_2$ feature network smoothing.
Because the generalized ridge penalty $\beta' J_p \beta$ does not represent a norm, it is difficult to work with as part of the regularizer term.
Therefore, we consider it part of the loss function instead, and write the objective function as
$$
\hat\theta = \argmin_{\theta} \left\{\mathbb{P}_n \mathcal{L}_i(\theta) + \lambda \mathcal{R}(\theta) \right\},
$$
where $\mathcal{L}_i(\theta) = \ell_i(\alpha_i + x_i'\beta) + \frac{1}{2} \gamma_n \alpha' (L_n + \delta I_n) \alpha + \frac{1}{2} \gamma_p \beta' L_p \beta $ and $\mathcal{R}(\theta) = \| \beta \|_1$.

Then, since the loss function $\mathbb{P}_n \mathcal{L}_i$ is convex and differentiable in $\theta$, we can apply a similar proof as in Theorem \ref{thm:consistency} to show $\hat{\theta} \rightarrow \theta^*$, where $\theta^* = \argmin_{\theta} \left\{\mathbb{P} \mathcal{L}(\theta) \right\}$.
In order to show that $\| \theta^* - \theta^0 \|$ is negligible, we would need to make a stronger assumption on the target parameters.
That is, if $\gamma_p \|L_p \beta^* \|_2$ and $\gamma_n \| (L_n + \delta I_n) \alpha^* \|_2$ are $o_p(1)$, we can conclude that the target parameter $\theta^*$ asymptotically tends to the true parameter $\theta^0$. 
From here, the validity of our inference procedure given in Theorem~\ref{thm:inference} would follow. 

\section{Equivalence of RNC and linear mixed models}
\label{app:lmms}

In this section, we briefly discuss the equivalence of the RNC estimator to a linear mixed effects model.
For simplicity, we consider a low-dimensional setting, where $n < p$, and do not incorporate any feature network information. 
We use $\pi$ in this section to denote a density function.

We assume a linear model, 
\begin{align*}
Y &= \alpha + X\beta + \epsilon \\
\alpha &\sim N(0, \phi (L_n + \delta I)^{-1}) \\
\epsilon &\sim N(0, \sigma^2 I),
\end{align*}
where $\alpha \independent \epsilon$.
We first consider \textit{conditional} estimation of $\beta$.
It is easy to see that $Y|\alpha \sim N(\alpha + X\beta, \sigma^2 I)$.
Then, maximizing the log-likelihood $\log \pi(y|\alpha) + \log \pi(\alpha)$ is equivalent to minimizing the objective function
$$
\cfrac{1}{\sigma^2} (y - \alpha - X\beta)'(y - \alpha - X\beta) + \cfrac{1}{\phi} \alpha' (L_n + \delta I) \alpha.
$$
With known $\sigma$ and setting $\phi = \gamma_n^{-1}$, we obtain the RNC objective function
$$
(y - \alpha - X\beta)'(y - \alpha - X\beta) + \gamma_n \alpha' (L_n + \delta I) \alpha.
$$
This relationship holds for other generalized linear models.
Therefore, we can interpret RNC as estimating conditional associations between $y$ and $X$, given correlation induced through random intercepts $\alpha$. 

In the linear model case, we can also use the equivalence of marginal and conditional models due to the additivity of the random effects, and $\alpha$ having mean zero \citep{ritz2004equivalence}.
With known variance, maximizing the likelihood $\pi(y)$ directly is equivalent to minimizing
$$
(y - X\beta)' [\phi (L_n + \delta I)^{-1} + \sigma^2 I]^{-1} (y - X\beta).
$$
Hence, RNC can also be interpreted as marginal estimation of $\beta$ in a mixed model using a generalized least squares estimator.

\section{Additional simulation studies}
\label{app:moresims}

\begin{figure}[t]
    \centering
    \includegraphics[scale=0.5]{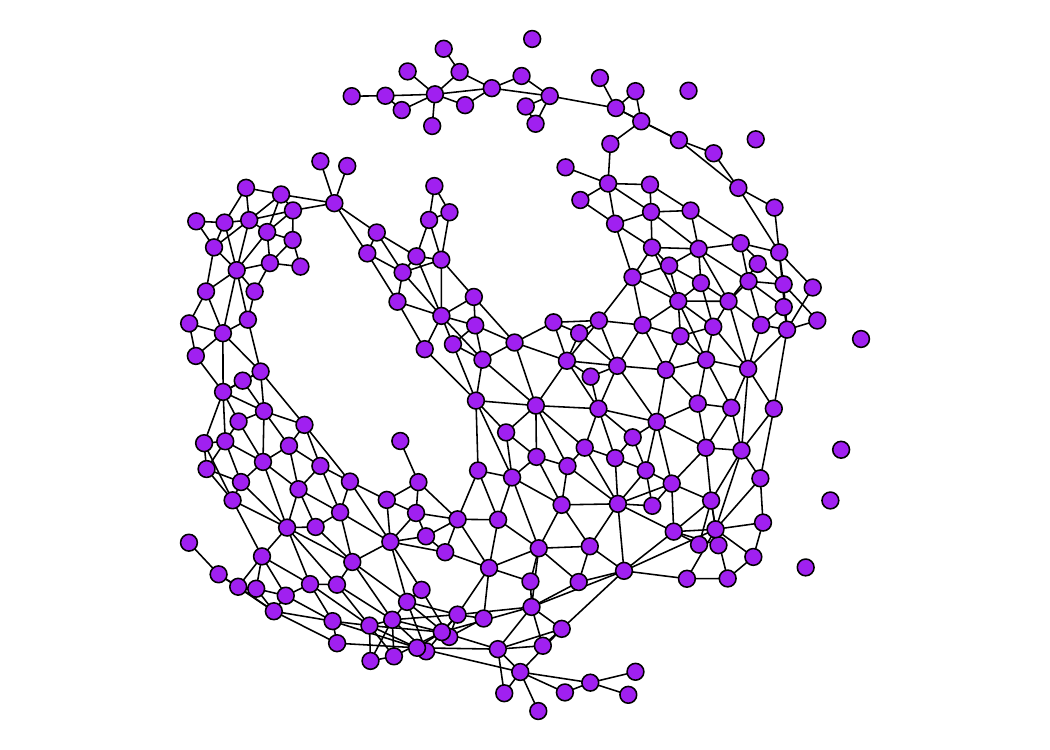}
    \caption{Spatial lattice corresponding to $204$ King County and surrounding area ZIP codes.}
    \label{fig:kcfull}
\end{figure}

In this section, we report the results of additional simulation studies. 
We consider estimation using Grace-type penalization \citep{li2008network}; that is, where we only incorporate feature network information.
We also examine the effect of having less information in the feature network $G_p$, by deleting true edges.

We generate binary data from the following model:
$$
P(Y = 1 | X = x) = \text{expit}(\alpha + X\beta)
$$
with $p = 300, n = 200,$ and $s = 20$. 
The feature and $\beta$ coefficients are set similarly as in the main paper simulations.
However, instead of using a spatial lattice, we set $G_n$ to be a stochastic block model \citep{holland1983stochastic} as in \citet{li2019rnc}.
We divide the observed units into five fully-connected blocks with equal probability of membership.
The unit-level intercepts are then generated from normal distributions that differ between blocks.
Specifically, the block means considered are -4, -2, 0, 2, and 4, so that the intercepts have meaningful effects on $P(Y = 1|X = x)$.
All distributions have a common standard deviation of 0.2.
We select tuning parameters via 5-fold cross-validation; $2n$ observations are generated, with $n$ observations each used for training and as a test set. 
We assume knowledge of the full graph $G_n$ over the $2n$ units, and report the average powers and Type I error rates at the 0.05 significance level, and the test set logistic deviance over 100 simulated datasets.

\subsection*{Grace estimators}

We now compare the \texttt{glm-funk} fits to estimators which do not incorporate any unit network information.
When using $\ell_2$ feature network smoothing, this estimator is the same as the Grace estimator of \citet{li2008network}. 
We also examine the corresponding estimator with $\ell_1$ smoothing. 

Results are shown in Figure \ref{fig:grace1}.
We see that the Grace estimators perform very similarly to the \texttt{glm-funk} estimators in these settings.
Notably, these methods have very similar powers. 
\texttt{glm-funk} shows a slight advantage in terms of prediction.

\begin{figure}[t]
    \centering
    \includegraphics[scale=0.75]{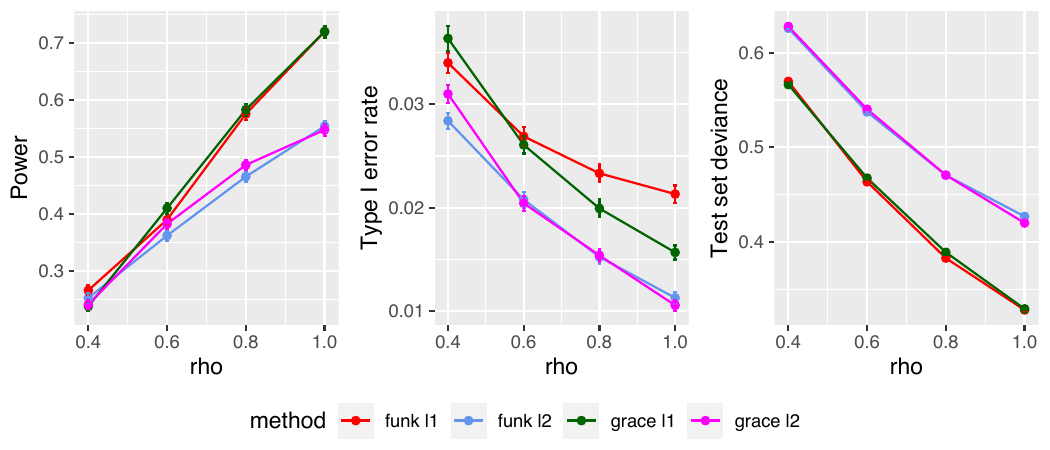}
    \caption{Simulation results for fully informative networks, comparing \texttt{glm-funk} and Grace estimators. Means over 100 replicates are displayed with standard error bars. Left: power, middle: Type I error rate, right: test set deviance.}
    \label{fig:grace1}
\end{figure}

\subsection*{Deleted true edges}

We now consider the same data-generating process, where knowledge of the feature network is restricted to the inactive set of features.
That is, we remove the edges connecting the $s$ active features in $G_p$. 
Note that this network is \textit{less} informative than the full $G_p$, but not \textit{uninformative}, since we do not have any false edges.
We compare the same set of estimators as in the main paper.
As shown in Figure \ref{fig:lessinfo}, the \texttt{glm-funk} estimators perform worse, but still maintain an advantage over the lasso-based methods. 

\begin{figure}[t]
    \centering
    \includegraphics[scale=0.75]{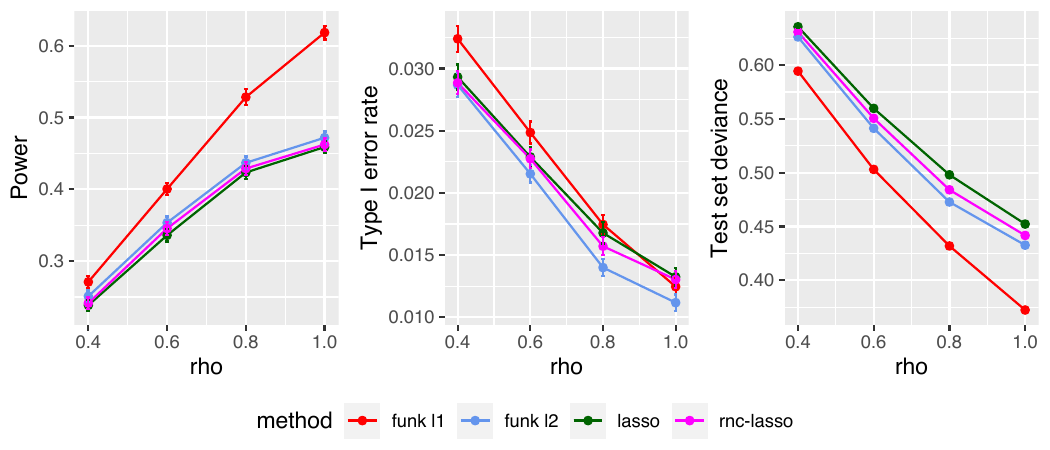}
    \caption{Simulation results for less informative networks. Means over 100 replicates are displayed with standard error bars. Left: power, middle: Type I error rate, right: test set deviance.}
    \label{fig:lessinfo}
\end{figure}

}

\end{document}